\pgfplotsset{compat=1.15}
\newcommand\resetstackedplots{%
\makeatletter
\pgfplots@stacked@isfirstplottrue
\makeatother
}
\DeclareDocumentCommand\reviewComment{mm}{}
\newtheorem{theorem}{Theorem}
\newtheorem{problem}[theorem]{Problem}
\newtheorem{corollary}[theorem]{Corollary}
\newtheorem{lemma}[theorem]{Lemma}
\DeclareDocumentCommand\transpose{m}{#1^{\intercal}}
\DeclareDocumentCommand\N{}{\mathbb{N}}
\DeclareDocumentCommand\Z{}{\mathbb{Z}}
\DeclareDocumentCommand\R{}{\mathbb{R}}
\DeclareDocumentCommand\F{}{\mathbb{F}}
\DeclareDocumentCommand\onevec{o}{\IfNoValueTF{#1}{\mathbbm{1}}{\mathbbm{1}_{#1}}}
\DeclareDocumentCommand\conv{o}{\operatorname{conv}\IfValueTF{#1}{\left(#1\right)}{}}
\DeclareDocumentCommand\supp{o}{\operatorname{supp}\IfValueTF{#1}{\left(#1\right)}{}}
\DeclareDocumentCommand\orderO{o}{\mathcal{O}\IfValueTF{#1}{\left(#1\right)}{}}
\DeclareDocumentCommand\zerovec{o}{\IfNoValueTF{#1}{\mathbb{O}}{\mathbb{O}_{#1}}}
\DeclareDocumentCommand\onevec{o}{\IfNoValueTF{#1}{\mathbbm{1}}{\mathbbm{1}_{#1}}}
\DeclareDocumentCommand\rank{o}{\operatorname{rank}\IfValueTF{#1}{\left(#1\right)}{}}
\DeclareDocumentCommand\BG{}{\operatorname{BG}}
\title{Recognizing Series-Parallel Matrices in Linear Time}
\author[1]{Matthias Walter\thanks{m.walter@utwente.nl}}
\affil[1]{Department of Applied Mathematics, University of Twente, Enschede, The Netherlands}
\date{\today}
\begin{document}

\maketitle

\begin{abstract}
  A series-parallel matrix is a binary matrix that can be obtained from an empty matrix by successively adjoining rows or columns that are copies of an existing row/column or have at most one $1$-entry.
  Equivalently, series-parallel matrices are representation matrices of graphic matroids of series-parallel graphs, which can be recognized in linear time.
  We propose an algorithm that, for an $m$-by-$n$ matrix $A$ with $k$ nonzeros, determines in expected $\orderO(m+n+k)$ time whether $A$ is series-parallel, or returns a minimal non-series-parallel submatrix of $A$.
  We complement the developed algorithm by an efficient implementation and report about computational results.
\end{abstract}


\section{Introduction}
\label{sec_intro}

We consider binary matrices $A \in \{0,1\}^{m \times n}$ and the matroids represented by those (see the books of Oxley~\cite{Oxley92} or Truemper~\cite{Truemper98} for relevant matroid concepts).
\emph{Series-parallel matroids} are those represented by \emph{series-parallel matrices}, which are defined recursively, where $m$ and $n$ are nonnegative integers:
every binary $m$-by-$n$ matrix with $m,n \leq 1$ is series-parallel.
For $m \geq 2$ or $n \geq 2$, $A$ is series-parallel if and only if it can be obtained from another series-parallel matrix $A'$ by adjoining a row vector (resp.\ column vector) that is a copy of a row (resp.\ column) vector of $A'$ or is a unit or zero vector.
The removal of such a row or column is called an \emph{SP-reduction}, more precisely a \emph{copy reduction}, \emph{unit reduction} or \emph{zero reduction}, respectively.
In other words, a matrix is series-parallel if there is a sequence of SP-reductions that yields the empty (i.e., $0$-by-$0$) matrix.
Matrices for which no SP-reduction is possible are called \emph{SP-reduced}.
The main problem of interest is the following.

\begin{problem}
  \label{prob_sp}
  Determine whether a given binary matrix is series-parallel.
\end{problem}

Series-parallel matroids were invented as generalizations of the well-known \emph{series-parallel graphs}.
The latter are those graphs that can be obtained from the graph with one node and a loop edge by iteratively duplicating an edge or subdividing an edge by a new node.
Series-parallel graphs naturally arise in electrical networks~\cite{Duffin65} and are well studied~\cite{BrandstaedtLS99}.
Moreover, the application of series-parallel extensions (the inverse operations of SP-reductions) to uniform matroids were studied by Chaourar and Oxley~\cite{ChaourarO03}.
It is well known that there is a one-to-one correspondence between the series-parallel graphs and the series-parallel matroids.
A matrix $A \in \F^{m \times n}$ \emph{represents a matroid} over some field $\F$ in the following sense:
the ground set $E$ of the matroid is the set of columns of the matrix $[\onevec \mid A]$, where $\onevec$ is the identity matrix of order $m$, and a subset $Y \subseteq E$ is independent if the corresponding columns are linearly independent over $\F$.
A binary matrix $A$ \emph{represents a graph} $G = (V,E)$ with respect to a spanning tree $T \subseteq E$ if the rows of $A$ can be indexed by the edges in $T$, the columns of $A$ correspond to the non-tree edges $E \setminus T$ and $A_{e,f} = 1$ holds if and only if the fundamental cycle $T \cup \{f\}$ contains edge $e$.
If this is the case, a \emph{copy reduction} (resp.\ \emph{unit reduction}) of a row corresponds to a contraction of a tree edge that is in series with (resp.\ parallel to) another edge.
Conversely, a \emph{copy reduction} (resp.\ \emph{unit reduction}) of a column corresponds to the deletion of a non-tree edge that is parallel to (resp.\ in series with) another edge.
Finally, a \emph{zero reduction} of a row corresponds to the contraction of a cut edge (i.e., an edge whose removal disconnects its endnodes), while a \emph{zero reduction} of a column corresponds to the deletion of a loop.
See Chapter~4.3 in Truemper's book~\cite{Truemper90} for proofs.

Not every matrix represents a graph, but one can recognize in almost-linear time whether this is the case~\cite{Fujishige80,BixbyW88}.
Here, \emph{linear} refers to the number $k$ of nonzeros of the given matrix, where assume $k \geq m,n$ throughout the paper.
Moreover, almost linear refers to $\orderO(k \cdot \alpha(k))$, where $\alpha$ denotes the \emph{inverse Ackermann function}~\cite{Tarjan75}.
The term ``almost linear'' is justified since $\alpha(k) \leq 4$ holds for all practically relevant values of $k$, namely for $k \leq 2^{65536}$.
The recognition problem for series-parallel graphs can be solved in linear time~\cite{ValdesTL82} (in the number of nodes and edges of the graph).
This immediately yields an almost-linear-time algorithm for \cref{prob_sp} by computing a graph $G$ represented by $A$ and then testing whether $G$ is series-parallel.

\paragraph{Contribution.}
First, our work improves upon the almost linear running time by removing the factor $\alpha(k)$.
Second, it yields a much simpler algorithm since the computation of the graph $G$ is quite involved.
Third, our extended algorithm finds, for a given non-series-parallel matrix, a forbidden submatrix as a certificate for not begin series-parallel.
It is unclear how to obtain such a certificate via the alternative approach sketched above, in particular because the algorithms~\cite{Fujishige80,BixbyW88} are not certifying.
Of course, one can always construct such a matrix by successively removing rows or columns and running the recognition algorithm again, but this clearly increases the running time from linear to quadratic.
Fourth, our algorithm can -- in contrast to the previous approach --  determine a maximal sequence of SP-reductions that can be applied to a given matrix (see \cref{sec_reduction}).
This is a useful preprocessing step for every recognition problem for matrix classes that are closed under adjoining zero or unit vectors or copies of existing rows/columns.
Examples of such a matrix classes are totally unimodular~\cite[Chapters~19 and~20]{Schrijver86}, balanced~\cite{Berge72}, perfect~\cite{Padberg84} and ideal~\cite{Lehman79} matrices.
In particular, the algorithm contributes to improvements for the state-of-the-art implementation of a total unimodularity test~\cite{WalterT13}.
The latter is used by researchers in combinatorial optimization and mixed-integer optimization to analyze problem structure (see, e.g.,~\cite{ChenDJ14,ConfortiFHW22,DeanGSAK22,DeukerF23,HenselmanD14}) since presence of total unimodularity indicates tractable problem (sub-)structures.
For instance, mixed-integer programming models with transportation or precedence constraints often contain such matrices.

\paragraph{Outline.}
In \cref{sec_reduction} we describe our main algorithm.
In \cref{sec_nonsp} we describe an extension that computes for a non-series-parallel matrix in linear time a minimal submatrix with the same property.
The short \cref{sec_ternary} is about the extension to ternary matrices, i.e., those with entries in $\{-1,0,+1\}$.
In \cref{sec_computations} we describe our implementation of the algorithm and report about computational results.

\section{Recognizing series-parallel matrices}
\label{sec_reduction}

\DeclareDocumentCommand\reduced{}{\mathcal{R}}
\DeclareDocumentCommand\queue{}{\mathcal{Q}}
\DeclareDocumentCommand\hashtable{}{\mathcal{H}}

The definition of series-parallel matrices is symmetric with respect to rows and columns.
Hence, we call their (disjoint) union $A$'s \emph{elements} and denote them by $E$.
For an element $e \in E$, $A(e)$ denotes the row vector $A_{r,\star}$ if $e$ is row $r$, while it denotes the column vector $A_{\star,c}$ if $e$ is column $c$.

We will present an algorithm that sequentially removes elements from the input matrix $A$ until it is SP-reduced.
It is easy to see that if an SP-reduction for element $e \in E$ is possible, and another one for $e' \in E$ is carried out, then an SP-reduction for $e$ will also be possible for the reduced matrix.
This shows that the order of removal does not matter, and hence $A$ is series-parallel if and only if the SP-reduction procedure terminates with an empty matrix.
Due to the simplicity of this algorithm, the only challenge lies in the running time.

\subsection{Data structures}
In order to achieve its running time, our algorithm relies on a couple of data structures.
First, in order to efficiently carry out a sequence of SP-reductions, we store the nonzeros of $A$ in a grid of doubly-linked lists.
More precisely, for each nonzero we store pointers to the previous and next nonzeros in the same row and to those in the same column, respectively.
We assume that the input matrix $A$ is given in a form that allows the creation of this data structure in linear time.
For instance, this is the case if the nonzeros are given as a list that is ordered lexicographically by rows and columns.
Moreover, since an SP-reduction would formally cause re-numbering of rows or columns, we actually replace nonzero entries by zeros.

Second, for each element $e \in E$, we store the number of nonzeros of $A(e)$, denoted by $|A(e)|_1$ for convenience.
This allows us to identify zero or unit reductions in constant time.

Third, we maintain a queue $\queue$ that contains all candidate elements for SP-reductions.
The main iteration of the algorithm consists in finding out whether an element extracted from $\queue$ admits an SP-reduction.
If this is the case, the reduction will be carried out, which may imply the addition of other elements to the queue.

Fourth, a hash table $\hashtable$ is used in order to identify copy reductions in constant time.
The corresponding hash function $h : E \to \Z$ shall depend on $A(e)$ only.
Moreover, we frequently update the hash value of an element $e \in E$ after a nonzero entry has been removed, which means that after one entry of $A(e)$ is modified, re-computing $h(e)$ shall be done in constant time.

We present such a hash function $h$ that requires randomization.
Let $(p,q) \in \N^n \times \N^m$ be a vector obtained by rounding a vector randomly chosen from a sphere in $\R^{m + n}$ of sufficiently large radius $R$, intersected with the first orthant.
We define
\begin{equation}
  h(e) \coloneqq \begin{cases}
                    \transpose{p} \transpose{A(e)} & \text{if $e$ is a row element,} \\
                    \transpose{q} A(e) & \text{if $e$ is a column element.}
                 \end{cases}
\end{equation}
Notice that for row elements $e \in E$, $A(e)$ is a row vector, while it is a column vector for column elements.
By the choice of $p$ and $q$, $h(e)$ is almost-surely collision-free for large radius $R$.
Moreover, if a $1$-entry of $A(e)$ is turned into a $0$-entry, $h(e)$ decreases by a corresponding entry of $p$ or $q$.
Hence, the hash value of an element can be updated in constant time.

\pagebreak[4]

\subsection{Reduction algorithm}

With these data structures at hand, we can now state our recognition algorithm.

\bigskip

\begin{algorithm}[H]
  \DontPrintSemicolon
  \SetAlgoLined
  \KwIn{Matrix $A \in \Z^{m \times n}$}
  \KwOut{Maximal sequence of SP-reductions \newline}
  Initialize list representation of $A$. \;
  Initialize empty hash table $\hashtable$ for keys $e \in E$ and values $A(e)$ and compute $h(e)$ for all $e \in E$. \;
  Initialize queue $\queue$ with all $e \in E$. \;
  Initialize the set $\reduced := \varnothing$ of recorded SP-reductions. \;
  \While{$\queue$ is not empty}
  {
    Extract element $e$ from $\queue$. \label{algo_reduce_extract} \;
    \uIf{$|A(e)|_1 = 0$}
    {
      Add $e$ to $\reduced$ and mark it as \emph{zero reduction}. \label{algo_reduce_zero_record} \;
    }
    \uElseIf{$|A(e)|_1 = 1$}
    {
      Add $e$ to $\reduced$ and mark it as \emph{unit reduction}. \label{algo_reduce_unit_record} \;
      Let $f \in E$ be such that $\{e,f\}$ are row and column indices of the $1$-entry of $A(e)$. \;
      Remove nonzero $\{e,f\}$ from $A$, add $f$ to $\queue$ if necessary, update hash value of $f$, and remove $f$ from $\hashtable$ if necessary. \label{algo_reduce_unit_nonzero} \;
    }
    \Else
    {
      Check via $\hashtable$ whether there is an element $e' \in E$ such that $A(e) = A(e')$. \label{algo_reduce_copy_check} \;
      \uIf{$\hashtable$ contains element $e' \in E$ with $A(e) = A(e')$}
      {
        Add $e$ to $\reduced$ and mark it as \emph{copy reduction for $e'$}. \label{algo_reduce_copy_record} \;
        \For{each $f$ such that $A(e)_f = 1$}
        {
          Remove nonzero $\{e,f\}$ from $A$, add $f$ to $\queue$ if necessary, update hash value of $f$, and remove $f$ from $\hashtable$ if necessary. \label{algo_reduce_copy_nonzero}
        }
      }
      \Else
      {
        Add $e$ to $\hashtable$. \label{algo_reduce_add_to_hash}
      }
    }
  }
  \Return{$\reduced$}
  \caption{Finding a maximal sequence of SP-reductions.}
  \label{algo_reduce}
\end{algorithm}

\bigskip

\begin{theorem}
  \label{thm_algo_reduce}
  For input matrices with $k$ nonzeros, \cref{algo_reduce} finds a maximal sequence of SP-reductions in expected $\orderO(k)$ time.
\end{theorem}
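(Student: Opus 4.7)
The plan is to treat correctness and running time separately, since correctness reduces to a bookkeeping argument around the queue--hashtable interaction, while the running time calls for an amortized analysis.

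For correctness I would argue that at termination $\reduced$ lists a maximal sequence of valid SP-reductions. Validity of each recorded reduction is immediate from the branch in which it is recorded: zero and unit reductions are certified by the $|A(e)|_1$-test, and a copy reduction is certified by the hit in $\hashtable$ (assuming collision-freeness, addressed below). Maximality relies on the invariant that every alive element $e$ at every point in time resides in $\queue$ or in $\hashtable$, and that the elements in $\hashtable$ have $|A(e)|_1 \geq 2$ and pairwise distinct $A$-vectors. The only way an alive element could become reducible while inside $\hashtable$ is through the removal of a nonzero from its $A$-vector, but every such removal is paired in the algorithm with re-inserting the affected element into $\queue$ and deleting its stale entry from $\hashtable$. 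Hence, once $\queue$ is empty, the remaining matrix is SP-reduced.

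For the running time I would give an amortized bound. Initialization costs $\orderO(m+n+k) = \orderO(k)$. The number of insertions into $\queue$ is at most $m+n$ initially plus one per nonzero removal, totalling $\orderO(k)$, so there are $\orderO(k)$ extractions. Each extraction does $\orderO(1)$ work plus work proportional to the number of nonzeros removed in that iteration, and the latter telescopes to $\orderO(k)$ across the whole execution. Granted that hashtable accesses and hash-value updates after flipping a single $1$ to a $0$ each cost $\orderO(1)$ expected time, this yields the claimed bound.

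The main obstacle is justifying that the hashtable check runs in expected constant time and, more importantly, that a match $h(e)=h(e')$ in $\hashtable$ really certifies $A(e)=A(e')$ without an explicit $\orderO(|A(e)|_1)$ verification. I would argue this by noting that every $A(e)$ occurring during the execution is a $0/1$ vector whose $1$-support is contained in the initial $1$-support of $A(e)$, so the family of vectors that ever appear is finite. A collision between two distinct such vectors forces the rounded random vector $(p,q)$ to lie on one of finitely many affine hyperplanes of $\R^{m+n}$ of strictly lower dimension, and a standard lattice-point counting estimate near a sphere of radius $R$ shows that the probability of $(p,q)$ hitting any of these hyperplanes tends to $0$ as $R \to \infty$. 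Choosing $R$ large enough makes the expected additional verification work $\orderO(1)$ per extraction, yielding the overall expected $\orderO(k)$ running time.
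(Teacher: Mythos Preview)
Your proposal is correct and follows essentially the same strategy as the paper: a partition-style invariant (alive elements live in $\queue$ or $\hashtable$, with $\hashtable$ holding only pairwise-distinct vectors of support at least~$2$) for correctness, and counting queue insertions against the initial $m+n$ plus one per removed nonzero for the $\orderO(k)$ bound. The paper phrases the invariant slightly differently (its condition~(ii) is ``if $e$ is reducible then some equal-vector element sits in $\queue$''), but the two formulations are equivalent, and your contrapositive version arguably makes maximality at termination more transparent. Your treatment of the hash-collision issue is more careful than the paper's, which simply asserts ``almost-surely collision-free for large $R$''; your hyperplane/lattice-point sketch is the natural way to make that precise.
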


\begin{proof}
  We first show that the algorithm actually finds a maximal sequence of SP-reductions.
  It is easy to see that all modifications of $A$ correctly reflect the recorded SP-reductions.
  We claim that the following invariants are satisfied for all elements $e \in E$ throughout the algorithm:
  \begin{enumerate}[label=(\roman*)]
  \item
    \label{eq_algo_reduce_partition}
    either $e \in \queue$ or $e \in \hashtable$ or $e \in \reduced$;
  \item
    \label{eq_algo_reduce_reducible}
    if an SP-reduction for $e$ is possible for $A$, then $e' \in \queue$ holds for
    some element with $A(e') = A(e)$.
  \end{enumerate}
  Since we initialize $\queue$ as $E$, both statements are satisfied at the beginning.
  Now consider an iteration of the main loop in which element $e \in E$ was extracted from $\queue$.
  Either $e$ is added to $\reduced$ in line~\ref{algo_reduce_zero_record}, line~\ref{algo_reduce_unit_record} or line~\ref{algo_reduce_copy_record} or $e$ is added to $\hashtable$ in line~\ref{algo_reduce_add_to_hash}.
  Moreover, if the SP-reduction causes the removal of nonzeros $\{e,f\}$ in line~\ref{algo_reduce_unit_nonzero} or line~\ref{algo_reduce_copy_nonzero}, then it is ensured that $f \in \queue$ and $f \notin \hashtable$ hold.
  This establishes invariant~\ref{eq_algo_reduce_partition}.

  Consider, for the sake of contradiction, a first iteration after which invariant~\ref{eq_algo_reduce_reducible} is violated, i.e.,
  an SP-reduction for $\hat{e}$ is possible, but no $e' \in E$ with $A(e') = A(\hat{e})$ is in the queue.
  Moreover, for each such $e'$ we have $e' \notin \reduced$ and thus $e' \in \hashtable$ by the invariant~\ref{eq_algo_reduce_partition}.
  If $\hat{e}$ was extracted from $\queue$ in this iteration, i.e., $\hat{e} = e$ holds, then we must have added $\hat{e}$ to $\hashtable$ in line~\ref{algo_reduce_add_to_hash}.
  In particular, the SP-reduction must be a copy reduction for some other element $e' \in E$.
  Since we argued that $e' \in \hashtable$ holds, we obtain a contradiction to the fact that we reached line~\ref{algo_reduce_add_to_hash}.
  Otherwise, $\hat{e}$ must have become SP-reducible, i.e., $\hat{e} = f$ holds for some element $f \in E$ for which $\{e,f\}$ is a nonzero of $A$.
  However, in the corresponding lines~\ref{algo_reduce_unit_nonzero} and~\ref{algo_reduce_copy_nonzero}, such elements $f$ are added to $\queue$, which yields a contradiction.
  We conclude that also invariant~\ref{eq_algo_reduce_reducible} holds.

  The total number of iterations is bounded by $m + n + k$ since $|\queue| = |E| = m + n$ holds initially, and since further additions to $\queue$ happen at most once per (removed) nonzero.
  This shows that the algorithm terminates, and by invariants~\ref{eq_algo_reduce_partition} and~\ref{eq_algo_reduce_reducible} with maximal $\reduced$.
  It also shows that lines~\ref{algo_reduce_extract}, \ref{algo_reduce_zero_record}, \ref{algo_reduce_unit_record}, \ref{algo_reduce_unit_nonzero}, \ref{algo_reduce_copy_check}, \ref{algo_reduce_copy_record} and~\ref{algo_reduce_add_to_hash} are each executed at most $m+n+k$ times.
  Clearly, lines~\ref{algo_reduce_copy_record} and~\ref{algo_reduce_copy_nonzero} are executed at most $k$ times since each time a nonzero is removed from $A$.
  Due to the data structures and the properties of the hash function, each of these lines can be executed in constant time, where this holds for lines~\ref{algo_reduce_unit_nonzero} and~\ref{algo_reduce_copy_nonzero} almost surely.
  We conclude that the overall running time is linear in $m + n + k$ in expectation.
\end{proof}

\section{Certifying non-series-parallel matrices}
\label{sec_nonsp}

In case a given matrix $A$ is series-parallel, the list of SP-reductions produced by \cref{algo_reduce} yields a certificate, i.e., an easily verifiable reason for begin series-parallel.
However, in the case that $A$ is not series-parallel, a user has to trust the correctness of its implementation.
Hence, it is desirable to be able to provide a simple reason for this negative outcome as well.
The goal of this section is to extend the algorithm as to provide such a certificate in terms of forbidden submatrices.
The latter are minimal non-series-parallel matrices, i.e., matrices that are not series-parallel, but for which every proper submatrix is series-parallel.

These matrices also appear in the context of the recognition of graphic matroids.
A \emph{graphic matroid} is defined by a connected undirected graph $G = (V, E)$ in which a subset $F \subseteq E$ of edges is \emph{independent} if it does not contain a cycle.
The \emph{bases} of the matroid are the maximal independent sets, that is, the spanning trees of $G$.
The graph $G$ together with one of its spanning trees $T$ defines a \emph{representation matrix} $M(G,T) \in \{0,1\}^{T \times (E \setminus T)}$ via $M(G,T)_{e,f} = 1$ holds if and only if the tree edge $e \in T$ lies on the unique cycle in $T \cup \{f\}$.

It is worth to mention that in this context the SP-reductions correspond directly to the graph operations.
For instance, the addition of a unit column (indexed by $f$) with the $1$-entry in row $e$ corresponds to the addition of an edge $f$ that is parallel to the tree edge $e$.
Similarly, the addition of a copy $f$ of a row $e$ corresponds to a replacement of $e$ by the a path of length $2$, i.e., $e$ and $f$ will be in series.

\begin{figure}[htpb]
  \hfill
  \subfloat[Wheel graph $W_\ell$ with spanning tree $T_\ell$ that consists of all spoke edges, together with the representation matrix.\label{fig_wheel_graphs1}]{%
    \begin{tikzpicture}[
      node/.style={circle,draw=black,inner sep=0mm,minimum size=5mm,thick},
      ]

      \foreach \v/\x/\y/\l in {u/0/0/$u$, v1/0/1.414/$v_1$, v2/1/1/$v_2$, v3/1.414/0/$v_3$,%
          v4/1/-1/$v_4$, v5/0/-1.414/$v_5$, v6/-1/-1/$v_6$, vl/-1/1/$v_\ell$}{
        \node[node] (\v) at (\x,\y) {\l};
      }
      \foreach \v in {v1, v2, v3, v4, v5, v6, vl}{
        \draw[ultra thick] (u) to (\v);
      }
      \foreach \v/\w in {vl/v1, v1/v2, v2/v3, v3/v4, v4/v5, v5/v6}{
        \draw[semithick] (\v) to[bend left=10] (\w);
      }
      \draw[semithick,dotted] (v6) to[bend left=40] (vl);
      \node at (2.2,0.7) {$W_\ell$, $T_\ell$};

      \node at (2.1,-3) {$
        M_\ell \coloneqq M(W_\ell,T_\ell) = \begin{pmatrix}
          1 & 0 & 0 & 0 & \dotsb & 0 & 1 \\
          1 & 1 & 0 & 0 & \dotsb & 0 & 0 \\
          0 & 1 & 1 & 0 & \dotsb & 0 & 0 \\
          0 & 0 & 1 & 1 & \ddots & 0 & 0 \\
          \vdots & \vdots & \vdots & \ddots & \ddots & \vdots & \vdots \\
          0 & 0 & 0 & 0 & \dotsb & 1 & 0 \\
          0 & 0 & 0 & 0 & \dotsb & 1 & 1
        \end{pmatrix}$};
    \end{tikzpicture}
  }
  \hfill
  \hspace{4mm}
  \hfill
  \subfloat[Wheel graph $W_\ell$ with spanning tree $T'_\ell$ that consists of all but one spoke edge as well as one cycle edge, together with the representation matrix.\label{fig_wheel_graphs2}]{%
    \begin{tikzpicture}[
      node/.style={circle,draw=black,inner sep=0mm,minimum size=5mm,thick},
      ]

      \foreach \v/\x/\y/\l in {u/0/0/$u$, v1/0/1.414/$v_1$, v2/1/1/$v_2$, v3/1.414/0/$v_3$,%
          v4/1/-1/$v_4$, v5/0/-1.414/$v_5$, v6/-1/-1/$v_6$, vl/-1/1/$v_\ell$}{
        \node[node] (\v) at (\x,\y) {\l};
      }
      \foreach \v in {v2, v3, v4, v5, v6, vl}{
        \draw[ultra thick] (u) to (\v);
      }
      \draw[semithick] (u) to (v1);
      \draw[ultra thick] (v1) to[bend left=10] (v2);
      \foreach \v/\w in {vl/v1, v1/v2, v2/v3, v3/v4, v4/v5, v5/v6}{
        \draw[semithick] (\v) to[bend left=10] (\w);
      }
      \draw[semithick,dotted] (v6) to[bend left=40] (vl);
      \node at (2.2,0.7) {$W_\ell$, $T_\ell'$};
      \node at (2.1,-3) {$
        M_\ell' \coloneqq M(W_\ell,T_\ell') = \begin{pmatrix}
          1 & 1 & 0 & 0 & \dotsb & 0 & 1 \\
          1 & 1 & 0 & 0 & \dotsb & 0 & 0 \\
          0 & 1 & 1 & 0 & \dotsb & 0 & 0 \\
          0 & 0 & 1 & 1 & \ddots & 0 & 0 \\
          \vdots & \vdots & \vdots & \ddots & \ddots & \vdots & \vdots \\
          0 & 0 & 0 & 0 & \dotsb & 1 & 0 \\
          0 & 0 & 0 & 0 & \dotsb & 1 & 1
        \end{pmatrix}$};
    \end{tikzpicture}
  }
  \hfill
  \phantom{x}
  \caption{Wheel graph $W_\ell$ with two different spanning trees $T_\ell$ and $T_\ell'$ as well as the two corresponding representation matrices $M_\ell$ and $M_\ell'$.}
  \label{fig_wheel_graphs}
\end{figure}

In \cref{fig_wheel_graphs} so-called \emph{wheel graphs} and the \emph{wheel matrices} of order $\ell$ are depicted.
They consist of a cycle of length $\ell \geq 3$ plus one node that is connected to every other node via \emph{spoke} edges.
In the figure, two different spanning trees $T_\ell$ and $T_\ell'$ together with the representation matrices $M_\ell$ and $M_\ell'$ are depicted.

These matrices are required as a starting point in Truemper's total unimodularity testing algorithm~\cite{Truemper90}.
There, in order to test for graphicness, a submatrix representing $W_3$ must be found.
The intuitive reason is that $W_3$ is the smallest 3-connected graph, and Truemper's graphicness test first finds such a submatrix and then iteratively grows it to a sequence of larger (graphic) submatrices.
This sequence is later used to efficiently search for certain matrix decompositions.
This \emph{graph minor} $W_3$ is determined from any $W_\ell$ by successively deleting a spoke edge and contracting a cycle edge.

Notice that the matrices $M_\ell$ and $M_\ell'$ differ only in the entry in the first row and second column.
It is easy to see that both matrices are SP-reduced.
We say that a matrix \emph{contains a wheel submatrix} if it contains a wheel matrix as a submatrix, possibly after permuting rows or columns.

We will show that the following holds.
\begin{theorem}
  \label{thm_characterization}
  A binary matrix is either series-parallel or it contains a wheel submatrix.
\end{theorem}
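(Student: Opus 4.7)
The plan is to reduce the theorem to a statement about minimal non-series-parallel matrices. If $A$ is not series-parallel, then \cref{algo_reduce} produces a nonempty SP-reduced submatrix $A'$ of $A$. Within $A'$, I would select a submatrix $B$ that is minimal (by inclusion of row and column index sets) with respect to not being series-parallel. Such a $B$ is necessarily SP-reduced, since otherwise an SP-reduction applied to $B$ would yield a strictly smaller non-series-parallel submatrix, contradicting minimality. Thus it suffices to show that every minimal non-series-parallel binary matrix is a wheel matrix.

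Let $B$ be such a matrix, and consider its bipartite incidence graph $G$, whose vertices are the rows and columns of $B$ and whose edges correspond to $1$-entries. SP-reducedness forces every vertex of $G$ to have degree at least $2$, and minimality forces $G$ to be connected (otherwise the block-diagonal decomposition of $B$ would expose a proper non-series-parallel submatrix). Hence $G$ contains a cycle; I would take a shortest one $C$, of length $2\ell \geq 4$. Being shortest, $C$ is induced in $G$. When $\ell \geq 3$, the restriction of $B$ to the rows and columns visited by $C$ therefore contains exactly the $1$-entries on $C$, and so equals $M_\ell$ up to row and column permutation. Since $M_\ell$ is non-series-parallel, minimality of $B$ forces this restriction to coincide with $B$, yielding $B = M_\ell$.

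The main obstacle is the case $\ell = 2$: the $4$-cycle $r_1 c_1 r_2 c_2$ alone supplies only a $2 \times 2$ all-ones submatrix, which is series-parallel. Here SP-reducedness of $B$ provides a column $c_3$ on which $r_1$ and $r_2$ differ and a row $r_3$ on which $c_1$ and $c_2$ differ. After possibly swapping $r_1 \leftrightarrow r_2$ and $c_1 \leftrightarrow c_2$, I may assume $B_{r_1, c_3} = 1 = B_{r_3, c_1}$ and $B_{r_2, c_3} = 0 = B_{r_3, c_2}$. If the witnesses can be chosen so that $B_{r_3, c_3} = 1$, the $3 \times 3$ submatrix on $\{r_1, r_2, r_3\}$ and $\{c_1, c_2, c_3\}$ equals $M_3'$ up to column permutation, and minimality then yields $B = M_3'$. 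The technical heart of the argument lies in excluding the remaining possibility that every valid witness pair satisfies $B_{r_3, c_3} = 0$; such a configuration forces the rows and columns on which $r_1, r_2$ respectively $c_1, c_2$ disagree to be decoupled by a large all-zero block, from which one can extract a proper non-series-parallel submatrix of $B$ and so contradict minimality.
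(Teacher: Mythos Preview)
Your approach is genuinely different from the paper's. The paper never argues directly about a minimal non-series-parallel matrix; instead it proves the correctness of \cref{algo_wheel_search} (via the path construction in lines~\ref{algo_wheel_search_path_column}--\ref{algo_wheel_search_path_row_zero} and, crucially, \cref{thm_two_separation_part} on $2$-separations), and then \cref{thm_characterization} falls out in two lines. Your minimality reduction and the $\ell \geq 3$ case are clean and correct.

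The gap is the $\ell = 2$ case, and it is a real one. Your final sentence is not a proof but a hope: you assert that if every witness pair satisfies $B_{r_3,c_3}=0$, then the resulting zero block on $R' \times C'$ (where $R'$ is the set of rows distinguishing $c_1$ from $c_2$ and $C'$ the columns distinguishing $r_1$ from $r_2$) lets you ``extract a proper non-series-parallel submatrix of $B$''. But $B$ is \emph{minimal}, so it has no proper non-series-parallel submatrix; your task is precisely to show that the zero-block configuration is impossible, and you have not indicated how. The zero block on $R' \times C'$ does not by itself give a $2$-separation (the opposite block $B[R_0,C_0]$ is unconstrained), nor does deleting $c_2$ or $r_2$ obviously leave something non-series-parallel. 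What the paper does here is substantially more: it grows the $2 \times 2$ all-ones block to a \emph{maximal} all-ones block indexed by $X,Y$, searches for a shortest $X$--$Y$ path avoiding that block, and either builds an $M_\ell'$ from the path or obtains a genuine $2$-separation, whose smaller side is still non-series-parallel by \cref{thm_two_separation_part}. That lemma is the structural workhorse, and nothing in your sketch replaces it. If you want to salvage the minimality route, you will almost certainly need to reintroduce the maximal all-ones block and a path/separation argument of this kind; the single $4$-cycle together with one pair of distinguishing witnesses is not enough information to force $B = M_3'$.
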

The proof is delayed as it follows from the correctness of our algorithm that returns one of these matrices as a submatrix when confronted with a matrix that is not series-parallel (see \cref{thm_algo_wheel_search}).
Using matroid operations, the certificate can be simplified even further.
The following corollary is a strengthening of Theorem~4.2.13 in Truemper's book~\cite{Truemper98}, where only the case of connected matroids is discussed.

\begin{corollary}
  A binary matroid is either series-parallel or contains the graphic matroid of the wheel graph $W_3$ as a minor.
\end{corollary}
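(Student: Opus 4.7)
The plan is to reduce the matroid-level claim to the matrix-level claim of \cref{thm_characterization} by fixing an arbitrary binary representation. Let $M$ be a binary matroid that is not series-parallel. Since, by the definition adopted here, series-parallel matroids are exactly those that admit \emph{some} series-parallel representation matrix, no representation of $M$ can be series-parallel. I will therefore fix any binary matrix $A$ for which $[\onevec \mid A]$ represents $M$; such an $A$ exists because $M$ is binary, and by the preceding observation $A$ is not series-parallel. Applying \cref{thm_characterization} to $A$ then produces a wheel submatrix, isomorphic (up to row/column permutation) to either $M_\ell$ or $M_\ell'$ for some $\ell \geq 3$.

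Next, I will translate this matrix containment into a matroid minor. Since the rows of $A$ are indexed by a basis $B$ of $M$ and its columns by $E \setminus B$, the standard submatrix-to-minor dictionary applies: deleting a column $c$ of $A$ corresponds to the matroid deletion $M \setminus c$, while deleting a row $b$ corresponds to the contraction $M / b$. Consequently, the wheel submatrix sitting inside $A$ is itself a representation matrix (with respect to the spanning tree $T_\ell$ or $T_\ell'$) of the wheel matroid $M(W_\ell)$, so that $M(W_\ell)$ is a minor of $M$.

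Finally, I will pass from $M(W_\ell)$ down to $M(W_3)$ via the graph operations already sketched above, just before \cref{fig_wheel_graphs}: iteratively delete a spoke edge of $W_\ell$ and contract one of the cycle edges of the resulting graph. Each such graph operation is mirrored by the corresponding matroid operation on $M(W_\ell)$ (deletion for the spoke, contraction for the cycle edge), and after $\ell - 3$ iterations $W_\ell$ reduces to $W_3$. Transitivity of the minor relation then exhibits $M(W_3)$ as a minor of $M$, completing the argument.

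The main obstacle is really only matroid bookkeeping: making the submatrix-versus-minor dictionary precise, and checking that every wheel submatrix identified by \cref{thm_characterization} genuinely represents $M(W_\ell)$ rather than merely sharing the same $0/1$-pattern. Beyond that, the heavy lifting is entirely delegated to \cref{thm_characterization}. Crucially, the argument never invokes matroid connectivity — the matrix-level result applies to arbitrary binary matrices — which is exactly what allows this corollary to strengthen Theorem~4.2.13 of Truemper's book beyond the connected case.
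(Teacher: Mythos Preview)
Your proposal is correct and follows essentially the same approach as the paper: invoke \cref{thm_characterization} to obtain a wheel submatrix, pass to the wheel matroid as a minor, and then reduce $W_\ell$ to $W_3$. The only difference is cosmetic --- the paper carries out the $W_\ell \to W_3$ reduction via matrix pivots (pivot $M_\ell$ to $M_\ell'$, then take the $M_{\ell-1}$-submatrix) rather than via graph deletions and contractions, but these are two descriptions of the same matroid operations.
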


\begin{proof}
  The result follows from \cref{thm_characterization} by observing that a graphic matroid of $W_\ell$ with $\ell \geq 3$ contains the graphic matroid of $W_3$ as a minor.
  In fact, a binary pivot operation on the entry in the second row and first column of $M_\ell$ yields $M_\ell'$, and $M_\ell'$ contains $M_{\ell-1}$ as a submatrix.
  Repeated application yields $M_3$ after $\ell-3$ pivots.
\end{proof}

\subsection{Bipartite graph}
\label{sec_nonsp_bipartite_graph}

We now introduce a graph-theoretic viewpoint that is important for the detection of wheel matrices.
A binary matrix $A \in \{0,1\}^{m \times n}$ gives rise to a \emph{bipartite graph}, denoted by $\BG(A)$, which has $m$ nodes on one side $R$ and $n$ nodes on the other side $C$ of the bipartition and those edges $\{r,c\}$ (with $r \in R$, $c \in C$) for which $A_{r,c} = 1$.
It is easy to see that $\BG(M_\ell)$ is a chordless cycle of length $2\ell$.
Hence, the basic idea of our recognition algorithm is to first apply \cref{algo_reduce} and then to find a chordless cycle in $\BG(A)$ of length at least $6$ in the SP-reduced matrix $A$.
Chordless cycles can be found using breadth-first search.
However, it may turn out that all found cycles have length $4$, which corresponds to a $2$-by-$2$ matrix with only $1$s.

In his paper~\cite{Truemper90}, Truemper describes a way to enforce finding a longer cycle (if one exists):
first, one grows the submatrix consisting of $1$s to an inclusion-wise maximal one, indexed by rows $X$ and columns $Y$.
Then, one searches for a shortest path $P$ from $X$ to $Y$ in $\BG(A)$ without using an edge corresponding to a $1$ in this submatrix.
If $P$ exists, then the submatrix induced by $P$ and one additional row and column is of type $M_{\ell}'$ (the top-left $2$-by-$2$ submatrix of $M_{\ell}'$ is part of the all-$1$s submatrix).

\begin{figure}[htb]
  \begin{equation*}
    \begin{pmatrix}
      1 & 1 & 1 & 1   & 0   & \textcolor{red}1 & 0 & 0 & \textcolor{red}1 \\
      1 & 1 & 1 & 1   & 0   & \textcolor{red}1 & 0 & 0 & 0 \\
      1 & 1 & 1 & 1   & 0   & 0 & 0 & 0 & 0 \\
      0 & 0 & 0 & 0   & 0   & \textcolor{red}1 & \textcolor{red}1 & 0 & 0 \\
      0 & 0 & 0 & 0   & 0   & 0 & \textcolor{red}1 & \textcolor{red}1 & 0 \\
      0 & 0 & 0 & 0   & 0   & 0 & 0 & \textcolor{red}1 & \textcolor{red}1 \\
      0 & 0 & 0 & 0   & 0   & 0 & \textcolor{red}1 & 0 & \textcolor{red}1 \\
      \textcolor{blue}1 & \textcolor{blue}1 & 0 & 0   & 0   & 0 & 0 & 0 & 0 \\
      0 & \textcolor{blue}1 & \textcolor{blue}1 & 0   & \textcolor{blue}1   & 0 & 0 & 0 & 0 \\
      \textcolor{blue}1 & \textcolor{blue}1 & 0 & 0   & \textcolor{blue}1   & 0 & 0 & 0 & 0
    \end{pmatrix}
  \end{equation*}
  \caption{An SP-reduced matrix $A$ with inclusion-wise maximal all-$1$s submatrix in the upper left part.
  Every path in $\BG(A)$ from the first three rows $X$ to the first four columns $Y$ must use an edge from this submatrix.
  The edges corresponding to the part reachable from $X$ are colored red, while those reachable from $Y$ are colored blue.}
  \label{fig_bipartite_graph_no_path}
\end{figure}

\pagebreak[4]

\subsection{Separations}
\label{sec_nonsp_separations}

In case such a path does not exist, we have found a \emph{$2$-separation} of $A$, which is a partitioning of $A$'s rows into $X^1$ and $X^2$ and $A$'s columns into $Y^1$ and $Y^2$ such that $\rank(A_{X^1,Y^2}) + \rank(A_{X^2,Y^1}) = 1$ and $|X^i|+|Y^i| \geq 2$ holds for $i=1,2$.
After reordering of rows and columns, $A$ looks as in \cref{fig_two_separation}.
The decomposition of $A$ into $A^1$ and $A^2$ is called a \emph{$2$-sum decomposition}.
For matroids, it corresponds to a $2$-sum decomposition involving the corresponding represented matroids.
However, we will work only on the matrix level and never exploit any matroid structure.
The $2$-separation for the example in \cref{fig_bipartite_graph_no_path} can be readily seen:
the rank-1 submatrix is the $7$-by-$5$ submatrix in the upper left, $B$ is the (smallest) submatrix containing all red $1$s, while $C$ is the (smallest) submatrix containing all blue $1$s.

\begin{figure}[htb]
  \begin{center}
    \begin{tikzpicture}
      \matrix (mat) [matrix of math nodes, inner sep=0, column sep=0, nodes={inner sep=3mm,text height=1em, text width=2em,align=center},ampersand replacement=\&,nodes in empty cells]
      {%
        B \& b \transpose{c} \\
        \zerovec \& C \\
      };
      \draw[very thick] (mat-1-1.north west) rectangle (mat-2-2.south east);
      \draw (mat-1-2.north west) -- (mat-2-2.south west);
      \draw (mat-1-1.south west) -- (mat-1-2.south east);
      \node[left=2mm of mat-1-1.south west] {$A = $};

      \matrix (matB) [right=20mm of mat, matrix of math nodes, inner sep=0, column sep=0, nodes={inner sep=3mm,text height=1em, text width=1em,align=center},ampersand replacement=\&,nodes in empty cells]
      {%
        B \& |[inner xsep=0mm]|b \\
      };
      \draw[very thick] (matB-1-1.north west) rectangle (matB-1-2.south east);
      \draw (matB-1-1.north east) -- (matB-1-1.south east);
      \node[left=2mm of matB-1-1.west] {$A^1 = $};

      \matrix (matC) [right=20mm of matB, matrix of math nodes, inner sep=0, column sep=0, nodes={inner sep=3mm,text height=1em, text width=1em,align=center},ampersand replacement=\&,nodes in empty cells]
      {%
        |[inner ysep=0.5mm]| \transpose{c} \\
        C \\
      };
      \draw[very thick] (matC-1-1.north west) rectangle (matC-2-1.south east);
      \draw (matC-1-1.south west) -- (matC-1-1.south east);
      \node[right=8mm of matB-1-2.east] {$A^2 = $};
    \end{tikzpicture}
  \end{center}
  \caption{Partitioned version of a $2$-separable matrix $A$ that is decomposed into a $2$-sum of $A^1$ and $A^2$ at the last column and first row, respectively.
  Note that it is required that $B$ and $C$ each have at least $2$ elements and that $b$ and $c$ are nonzero, i.e., $b \transpose{c}$ has rank~1.}
  \label{fig_two_separation}
\end{figure}

One may be tempted to recursively search for a wheel submatrix in both parts of the $2$-separation.
However, this may lead to an increased running time, and it turns out that both parts will contain such a submatrix.

\begin{lemma}
  \label{thm_two_separation_part}
  For a $2$-sum decomposition of an SP-reduced matrix $A \in \{0,1\}^{m \times n}$ as in \cref{fig_two_separation}, neither of submatrices $A^1$ and $A^2$ is series-parallel.
\end{lemma}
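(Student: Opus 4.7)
By the inherent symmetry of the $2$-sum decomposition (interchanging rows and columns, and the blocks $B$ and $C$), it suffices to prove that $A^1$ is not series-parallel; the statement for $A^2$ is entirely analogous. I will argue by contradiction, assuming $A^1$ is series-parallel and deducing that $A$ cannot be SP-reduced.

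The fundamental tool is a \emph{lifting} observation: an SP-reduction of an element of $X^1 \cup Y^1$ in $A^1$ whose applicability does not essentially rely on the marker column $*$ induces the \emph{same} SP-reduction on $A$. Concretely, a zero or unit-in-$Y^1$ reduction of a row $r \in X^1$ in $A^1$ forces $b_r = 0$, so the $b_r c^T$-block of $A(r)$ vanishes and the reducible structure is preserved; a copy reduction of rows $r, r' \in X^1$ in $A^1$ forces $b_r = b_{r'}$, hence $b_r c^T = b_{r'} c^T$ and therefore $A(r) = A(r')$; and all reductions of columns $y \in Y^1$ lift trivially since $A(:, y) = (B_{:, y}, \zerovec)$. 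Since $A$ is SP-reduced, no lifting reduction can exist, so the first reduction in any SP-reduction sequence of $A^1$ must involve $*$ in an \emph{essential} (non-lifting) manner. A case check shows the essential first reductions are exactly: (a) a zero reduction of $*$, i.e.\ $b = \zerovec$, ruled out by $b \neq \zerovec$; (b) a unit reduction of $*$ at some row $r$, i.e.\ $b = e_r$; (c) a copy reduction between $*$ and some $y' \in Y^1$, i.e.\ $b = B_{:, y'}$; or (d) a unit reduction of a row $r \in X^1$ at $*$, i.e.\ $B_{r,:} = \zerovec$ and $b_r = 1$.

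The core of the argument is then to derive a contradiction in each of cases~(b)--(d). In each case the structural constraint on $b$ (or on a row of $B$), combined with $c \neq \zerovec$ and the SP-reducedness of $A$, forces additional structure on the lower-right block $C$. For instance, in case~(c) the column $A(:, y')$ equals $(b, \zerovec)$, whereas $A(:, y^\star) = (b, C_{:, y^\star})$ for any $y^\star \in Y^2$ with $c_{y^\star} = 1$; distinctness forces $C_{:, y^\star}$ to be nonzero and pairwise distinct across such $y^\star$. Iterating by performing the essential reduction and examining the next reduction in the sequence of $A^1$ (which is again either a lifting reduction, yielding an immediate contradiction, or another essential reduction) propagates further constraints until an SP-reducible element of $A$ is identified.

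The main obstacle I anticipate is controlling this iteration, particularly in cases~(b) and~(d) where after the first essential reduction the remaining matrix $A^1_{\text{new}}$ still has a (possibly nontrivial) marker column interacting with the $X^2, Y^2$ block, so one cannot directly appeal to the lemma at a smaller instance. I plan to resolve this by strong induction on $|X^1| + |Y^1|$: either an intermediate reduction in the $A^1$-sequence satisfies the lifting property and yields an SP-reducible element of $A$ directly, or the essential structure reduces to a strictly smaller $2$-sum configuration to which the induction hypothesis applies. The base case $|X^1|+|Y^1|=2$ is dispatched by direct enumeration of the few possibilities for $(B, b)$: combined with $|X^2|+|Y^2| \geq 2$ and $c \neq \zerovec$, each configuration forces $A$ to contain a zero, unit, or duplicate row or column, contradicting SP-reducedness of $A$ independently of $C$.
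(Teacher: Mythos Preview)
Your skeleton is the same as the paper's: argue by contradiction, observe that any SP-reduction of $A^1=[B\mid b]$ not involving the marker column lifts to an SP-reduction of $A$ (impossible), classify the essential first reductions into your cases~(b),~(c),~(d), and use a size-of-$A^1$ induction/minimality argument. So the approach is correct in outline and matches the paper.

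Where the paper is cleaner is in how each case is discharged. Rather than ``performing the essential reduction and examining the next reduction'' and tracking constraints on $C$, the paper simply \emph{moves the offending element across the partition} to manufacture a strictly smaller $2$-separation of the very same SP-reduced matrix $A$:
in case~(c) the column $y'$ with $B_{\cdot,y'}=b$ is moved from $Y^1$ to $Y^2$ (it fits because its $X^2$-part is zero and its $X^1$-part equals $b$);
in case~(b) the row $r$ with $b=e_r$ is moved from $X^1$ to $X^2$ (this flips which off-diagonal block carries the rank, so one passes to $\transpose{A}$);
in case~(d) the row $r'$ is moved to $X^2$ (its $Y^1$-part is zero and its $Y^2$-part equals $c^\intercal$), after which case~(b) or~(c) applies. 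In every case the new $A^1$ is exactly your $A^1_{\text{new}}$ up to relabelling the marker, so it is still series-parallel, and the minimal-counterexample assumption is violated immediately. This bypasses your iteration entirely; in particular your digression on forcing $C_{\cdot,y^\star}$ to be nonzero and pairwise distinct is not needed. Your remark that the hard cases are (b) and (d) is also slightly off: in both (b) and (c) the marker column is removed by the reduction, so your ``next reduction is again lifting or essential'' dichotomy is undefined there unless you first re-identify a marker---which is precisely the paper's move-across-the-partition step. Once you make that step explicit, your proof collapses to the paper's.
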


\begin{proof}
  By symmetry it suffices to prove the statement for $A^1$.
  For the sake of contradiction, assume that $A^1 = [B \mid b]$ is series-parallel and that among all $2$-separations of matrices $A$ with the same number of elements, $A^1$ has a minimum number of elements.
  Note that $A^1$ has at least $3$ rows and $3$ columns since otherwise there would be an SP-reduction applicable to $A$.
  Because $A$ is SP-reduced, the only SP-reductions applicable to $[B \mid b]$ can be column reductions that involve $b$ or a unit row reduction with a $1$-entry in $b$.
  We distinguish the possible reductions.

\medskip
\noindent
  \textbf{Case 1: $b$ is identical to a column $d$ of $B$.}
  We can remove the column $d$ from $B$ and attach it to $C$.
  Then the lower-left submatrix (in \cref{fig_two_separation}) remains a zero matrix and the upper-right matrix remains a rank-1 submatrix, just with one more column than in the given $2$-separation.
  This yields another $2$-separation, which violates our assumption on the size of $A^1$.

\medskip
\noindent
  \textbf{Case 2: $b$ is a unit vector.}
  We can remove the row $r$ in which $b$ has the unique $1$-entry from $B$ and attach it to $C$.
  This turns the upper-right submatrix into a zero matrix and the lower-left one into a rank-1 submatrix with nonzeros
  only the row $r$.
  This yields a $2$-separation of $\bar{A} = \transpose{A}$ into $\bar{A}^1$ and $\bar{A}^2$ such that $\bar{A}^1$ has one element less than $A^1$, which contradicts our assumption on $A$ and the $2$-separation.

\medskip
\noindent
  \textbf{Case 3: for some row $r'$, $b_{r'} = 1$ and $B_{r',\star} = \zerovec$ hold.}
  We can remove the row $r'$, which effectively sets $b_{r'} = 0$.
  Unless Case~1 or Case~2 were already applicable before, they must be applicable now, since otherwise there is no SP-reduction possible.
  In both cases, we also attach row $r'$ to $C$ (in addition to column $c$ or row $r$, respectively).

\medskip
\noindent
  We conclude that such a matrix $A$ with such a $2$-separation cannot exist, which completes the proof.
\end{proof}

Hence, it suffices to only consider the smaller of the two components $A^1$, $A^2$ for a recursive search.
Note that $A^1$ (or $A^2$) is not necessarily SP-reduced, and hence we need to apply \cref{algo_reduce} again.

\subsection{Wheel search algorithm}
\label{sec_nonsp_algorithm}

The previous discussion leads to the following recursive algorithm for searching a wheel submatrix.

\bigskip

\begin{algorithm}[H]
  \DontPrintSemicolon
  \SetAlgoLined
  \KwIn{Matrix $A \in \{0,1\}^{m \times n}$}
  \KwOut{Either ``$A$ is series-parallel'' together with a list of $m+n$ SP-reductions, \newline or ``$A$ is not series-parallel'' together with a wheel submatrix of $A$. \newline}
  Run \cref{algo_reduce} for $A$, obtain $\reduced$ and replace $A$ by the reduced matrix. \label{algo_wheel_search_reduce} \;
  \lIf{$|\reduced|=m+n$}{\Return{``$A$ is series-parallel'' \textnormal{together with} $\reduced$.}}
  \Else
  {
    Let $A'$ be the SP-reduced matrix. \;
    Run breadth-first search in $\BG(A')$ to find a chordless cycle $C$ of length $2 \ell$ for some $\ell \in \N$. \label{algo_wheel_search_first_bfs} \;
    \eIf{$C$ exists}
    {
      Let $B$ be the submatrix of $A$ indexed by all rows and columns of $C$. \;
      \lIf{$\ell \geq 3$}
      {
        \Return{``$A$ is not series-parallel'' \textnormal{together with} $B$.} \label{algo_wheel_search_long_cycle}
      }
      \Else
      {
        Grow $B$ to a maximal submatrix of $A'$ that contains only $1$s and let $X$ and $Y$ be the row and column sets of $B$, respectively. \;
        Define $A''$ to be $A'$ with the entries of $B$ replaced by $0$s. \;
        Using breadth-first search, search for a (shortest) path $P$ from $X$ to $Y$ in $\BG(A'')$. \label{algo_wheel_search_second_bfs} \;
        \uIf{$P$ exists}
        {
          Let $c$ be the column that comes directly after $P$'s starting node from $X$. \label{algo_wheel_search_path_column} \;
          Let $r' \in X$ be such that $A_{r',c} = 0$. \;
          Let $r$ be the row that comes directly before $P$'s end node from $Y$. \label{algo_wheel_search_path_row} \;
          Let $c' \in Y$ be such that $A_{r,c'} = 0$. \label{algo_wheel_search_path_row_zero} \;
          Let $B'$ be the submatrix indexed by all rows and columns of $P$, row $r'$ and column $c'$. \;
          \Return{``$A$ is not series-parallel'' \textnormal{together with} $B'$.}
        }
        \Else
        {
          The nodes reachable from $X$ induce a $2$-separation of $A$ with parts $A^1$ and $A^2$. \;
          Let $i \in \{1,2\}$ be such that $A^i$ has the minimum number of elements. \;
          \Return{} output of recursive call of \cref{algo_wheel_search} for $A^i$. \label{algo_wheel_search_recurse1}
        }
      }
    }
    {
      The nodes reachable from the source node of the search induce a $2$-separation of $A$ with parts $A^1$ and $A^2$. \;
      Let $i \in \{1,2\}$ be such that $A^i$ has the minimum number of elements. \;
      \Return{} output of recursive call of \cref{algo_wheel_search} for $A^i$. \label{algo_wheel_search_recurse2}
    }
  }
  \caption{Certifying recognition algorithm for binary series-parallel matrices.}
  \label{algo_wheel_search}
\end{algorithm}

\bigskip

\begin{theorem}
  \label{thm_algo_wheel_search}
  Let $A \in \{0,1\}^{m \times n}$ have $k$ nonzeros.
  Then \cref{algo_wheel_search} determines in expected $\orderO(k)$ time whether $A$ is series-parallel and if not, finds a wheel submatrix of $A$.
\end{theorem}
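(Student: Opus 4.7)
The plan is to split the argument into correctness (by case analysis on the termination branch) and expected running time (via a geometric recurrence on the recursive calls). The trivial termination $|\reduced| = m+n$ already certifies series-parallelity by \cref{thm_algo_reduce}. Otherwise the residual matrix $A'$ is SP-reduced and nonempty, and I would treat the remaining branches in turn.

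For the chordless-cycle branch, I would first note that a chordless cycle of length $2\ell$ in $\BG(A')$ alternates row- and column-nodes, and chordlessness forces the submatrix indexed by its vertices to contain exactly the $2\ell$ one-entries corresponding to the cycle edges, so after reordering rows and columns its support matches $M_\ell$ exactly. Hence $\ell \geq 3$ yields a wheel submatrix of $A$ directly. For $\ell = 2$ with a shortest path $P$ from $X$ to $Y$ in $\BG(A'')$, the witnesses $r' \in X$ with $A_{r',c}=0$ and $c' \in Y$ with $A_{r,c'}=0$ exist by maximality of the all-ones block $B$ on $X \times Y$: if every $r'' \in X$ had $A_{r'',c}=1$, then $B$ could be extended by $c$, contradicting maximality. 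The main technical step is to verify that the submatrix $B'$ indexed by $P$, $r'$, and $c'$ has precisely the support of $M'_{\ell'}$ for $\ell'$ equal to one more than the number of intermediate rows of $P$. The two ``dense'' rows of $M'_{\ell'}$ correspond to $r_0$ and $r'$, the two ``dense'' columns to $c'$ and the $Y$-endpoint of $P$, and any putative extra $1$-entry off the intended diagonal pattern would yield a strictly shorter $X$-to-$Y$ path in $\BG(A'')$, contradicting shortestness of $P$. I expect this shortest-path-to-induced-structure verification to be the main obstacle of the correctness proof.

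For the remaining ``no path'' and ``no cycle'' branches, the vertex set reached by BFS induces a partition of the rows and columns of $A$ yielding a $2$-separation as in \cref{fig_two_separation}, whose cross-blocks are $b\transpose{c}$ (rank one) and zero. \cref{thm_two_separation_part} then asserts that neither $A^1$ nor $A^2$ is series-parallel, so the recursive call on the smaller part returns a wheel submatrix; since the glue row/column of $A^i$ can be identified with an actual row/column of $A$, this submatrix is also a submatrix of $A$.

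For the running time, \cref{algo_reduce} takes expected $\orderO(k)$ by \cref{thm_algo_reduce}, and both BFS searches run in time linear in the edge count of $\BG(A')$, which is $\orderO(k)$. A direct count gives $k^1 + k^2 = k + |b|_1 + |c|_1 - |b|_1 \cdot |c|_1 \leq k + 1$ using $|b|_1, |c|_1 \geq 1$, where $k^i$ denotes the nonzero count of $A^i$. Hence recursing on the smaller part reduces the nonzero count to at most $(k+1)/2$, yielding the recurrence $T(k) \leq T((k+1)/2) + \orderO(k)$, which solves to $T(k) = \orderO(k)$.
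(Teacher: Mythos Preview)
Your proposal is correct and follows essentially the same route as the paper: the same case analysis on the algorithm's branches, the same chordless-cycle and shortest-path verifications for the wheel submatrices, invocation of \cref{thm_two_separation_part} for the recursive branch, and the same nonzero-count recurrence (your $k^1+k^2 = k + |b|_1 + |c|_1 - |b|_1|c|_1$ is exactly the paper's identity with $|b|_1=|X|$, $|c|_1=|Y|$, just using the weaker bound $\geq 1$ instead of $\geq 2$). One small caveat: the paper handles the ``no cycle'' branch by observing that $\BG(A')$ has minimum degree $\geq 2$ (so BFS always finds a cycle and this branch is unreachable), whereas your claim that it yields a $2$-separation would actually produce a $1$-separation if the branch were ever entered --- harmless here, but worth tightening.
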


\begin{proof}
  We first discuss the correctness of the algorithm and then turn to its running time.

  After line~\ref{algo_wheel_search_reduce}, $A$ is SP-reduced, which implies that $\BG(A)$ is not a forest.
  Hence, the breadth-first search in line~\ref{algo_wheel_search_first_bfs} finds a chordless cycle of length at least $4$.
  As explained in \cref{sec_nonsp_bipartite_graph}, a chordless cycle of length at least $6$ in $\BG(A)$ corresponds to a submatrix $M_{\ell}$ for $\ell \geq 3$, which is returned in line~\ref{algo_wheel_search_long_cycle} if such a cycle is found.
  Otherwise, path $P$ is searched for in line~\ref{algo_wheel_search_second_bfs}.

  Suppose, $P$ exists.
  By maximality of the submatrix $B$, in line~\ref{algo_wheel_search_path_column}, $A_{X,c}$ is not the all-$1$s vector, and hence, $r'$ is well defined.
  Similarly, $c'$ from line~\ref{algo_wheel_search_path_row_zero} is well defined.
  We claim that the matrix $B'$ is of the form $M_{\ell}'$.
  To see this, observe that among the rows $X$, exactly two belong to $B'$, one with $A_{r',c} = 0$ and one with a $1$-entry in column $c$.
  Similarly, exactly two of the columns $Y$ belong to $B'$, one with $A_{r,c'} = 0$ and one with a $1$-entry in row $r$.
  The fact that $P$ is a shortest $X$-$Y$-path ensures that $B'$ has all other required $0/1$-entries.

  If $P$ does not exist, the $2$-separation of $A$ is easily verified.
  \cref{thm_two_separation_part} ensures that we can restrict our search to any of the parts $A^1$, $A^2$.
  Clearly, the recursion will terminate since $A^i$ in line~\ref{algo_wheel_search_recurse1} or line~\ref{algo_wheel_search_recurse2} has fewer elements than $A$.

  We now prove that the algorithm runs in expected linear time.
  The bipartite graph $\BG(A)$ has $m + n \in \orderO(k)$ nodes and $k$ edges, and hence the breadth-first search runs in $\orderO(k)$ time.
  Hence, all lines except for the recursive calls in lines~\ref{algo_wheel_search_recurse1} and~\ref{algo_wheel_search_recurse2} can be carried out in time $\orderO(k)$.
  To bound the overall running time, let $f(k)$ denote the running time of the algorithm for matrices with $k$ nonzeros
  (and at most $k$ rows and columns).
  Let $k_1$ and $k_2$ be the number of nonzeros of $A^1$ and $A^2$, respectively.
  We obtain
  \begin{equation*}
    k = |X| \cdot |Y| ~+~ (k_1 - |X|) ~+~ (k_2 - |Y|).
  \end{equation*}
  For the smaller of the two, $k_i$, we obtain
  \begin{equation*}
    2k_i \leq k_1 + k_2 = k - |X| \cdot |Y| + |X| + |Y| = k - (|X|-1) \cdot (|Y|-1) + 1.
  \end{equation*}
  which implies $k_i \leq k/2$ since $|X|,|Y| \geq 2$ holds.
  This yields the recurrence relation $f(k) = \orderO(k) + f(k/2)$, which yields $f(k) = \orderO(k)$.
\end{proof}

A few remarks can be made.
First, \cref{algo_wheel_search} can be modified\footnote{By removing the second row and the first column of a matrix $M_{\ell'}$.} to return an $M_{\ell-1}$ submatrix of $M_{\ell}'$ in case $\ell \geq 4$.
The modified algorithm will then either return the submatrix $M_3'$ or a submatrix $M_{\ell}$ for $\ell \geq 3$, all of which are \emph{inclusion-wise minimal} non-series-parallel submatrices.
Second, it can be modified to either return the sequence of SP-reductions in case $A$ is determined to be series-parallel, or return a wheel submatrix otherwise, making it a certifying algorithm for the recognition problem.

\begin{proof}[Proof of \cref{thm_characterization}.]
  It follows from \cref{thm_algo_wheel_search} that every matrix that is not series-parallel must contain a wheel matrix.
  Moreover, wheel matrices are SP-reduced, and hence they are not series-parallel.
  This concludes the proof.
\end{proof}

\section{Extension to ternary matrices}
\label{sec_ternary}

Scaling rows or columns of representation matrices by $-1$ does not change the underlying matroid.
While this has no effect for the binary field due to $-1 \equiv +1 \pmod 2$, it does matter for the ternary field.
Our techniques easily be extend to this case by allowing to scale rows and columns by $-1$.
Hence, a ternary series-parallel matrix is constructed from a ternary $1$-by-$1$ matrix by successively adding zero rows/columns, (negated) unit row/column vectors or (negated) copies of existing rows/columns.
We call the corresponding reductions \emph{ternary SP-reductions}.

\paragraph{Reduction algorithm.}
\cref{algo_reduce} naturally extends as well: we can replace our hash function $h: E \to \Z$ by $h' : E \to \N$ defined via $h'(e) \coloneqq |h(e)|$.
If there exist elements $e, e' \in E$ with $A(e) = -A(e')$, we will have $h(e) = -h(e')$ and thus $h'(e) = h'(e')$.
Hence, we will almost surely have collisions (only) for copies and for negated copies.
Clearly, also the check for equality of $A(e)$ and $A(e')$ has to be adapted to also detect negated copies.
However, neither of these adaptions affects the asymptotic runtime.
For the remainder of this section we thus consider \cref{algo_reduce} to be modified accordingly.

\paragraph{Certificates.}
By definition, every SP-reduction for a ternary matrix $A$ induces a corresponding SP-reduction for the (binary) support matrix $\supp(A)$.
Hence, if even $\supp(A)$ admits no (binary) SP-reduction, then any corresponding certificate also shows that $A$ is not ternary series-parallel.
In other words, for $\ell \geq 3$, the submatrices $M_\ell$ or $M_{\ell}'$ of $\supp(A)$ constitute certificates that can be found in expected linear time.

In addition, it is possible that the input matrix $A \in \{-1,0,+1\}^{m \times n}$ is ternary SP-reduced but $\supp(A)$ is not binary SP-reduced.
In this case, no zero- or unit reduction can be possible for $\supp(A)$ since the same reduction would be applicable to $A$ as well.
Hence, there must exist a binary copy reduction of $\supp(A)$ that does not correspond to a ternary copy reduction of $A$.
Without loss of generality, we consider only row copy reductions.
Thus, there exist rows $r$ and $r'$ such that $A_{r,\star} \neq A_{r',\star}$ and $A_{r,\star} \neq -A_{r',\star}$ but $\supp(A)_{r,\star} = \supp(A)_{r',\star}$ holds.
Hence, there must exist columns $c$ and $c'$ such that $A_{r,c} = A_{r',c}$ and $A_{r,c'} = -A_{r',c'}$.
This implies that $A$ contains (up to scaling of rows/columns) the matrix
\begin{align*}
  N_2 &\coloneqq \begin{pmatrix}
    -1 & 1 \\
    1 & 1
  \end{pmatrix}
\end{align*}
as a submatrix.
Clearly, $N_2$ is not series-parallel.
We call a submatrix of $A$ a \emph{signed} $B$-submatrix if it contains a submatrix that can be turned to $B$ by multiplying rows or columns with $-1$.

\bigskip

\begin{algorithm}[H]
  \DontPrintSemicolon
  \SetAlgoLined
  \KwIn{Matrix $A \in \{-1,0,1\}^{m \times n}$}
  \KwOut{Either ``$A$ is series-parallel'' together with a list of $m+n$ ternary SP-reductions, \newline or ``$A$ is not series-parallel'' together with a signed wheel- or $N_2$-submatrix of $A$. \newline}
  Run modified \cref{algo_reduce} for $A$ obtaining $\reduced$. \label{algo_ternary_search_reduce} \;
  \eIf{$|\reduced| = m + n$}
  {
    \Return{``$A$ is series-parallel'' \textnormal{together with} $\reduced$.}
  }
  {
    Let $A'$ arise from $A$ by applying all SP-reductions $\reduced$. \;
    Run \cref{algo_wheel_search} for $\supp(A')$ obtaining either $\reduced'$ or submatrix $B'$ of $\supp(A')$. \label{algo_ternary_wheel} \;
    \eIf{$\supp(A')$ is not series-parallel}
    {
      Let $B''$ be the signed wheel submatrix of $A$ corresponding to $B'$. \;
      \Return{``$A$ is not series-parallel'' \textnormal{together with} $B''$.}
    }
    {
      Let $e,e' \in E$ with $\supp(A'(e)) = \supp(A'(e'))$ be the elements of the first SP-reduction in $\reduced'$. \;
      Let $f,f' \in E$ be such that $A'(e)_f = A'(e')_f \in \{-1,+1\}$ and $A'(e)_{f'} = -A'(e')_{f'} \in \{-1,+1\}$. \;
      Let $B''$ be the submatrix of $A'$ induced by $e$, $e'$, $f$ and $f'$. \;
      \Return{``$A$ is not series-parallel'' \textnormal{together with} $B''$.}
    }
  }
  \caption{Certifying recognition algorithm for ternary series-parallel matrices.}
  \label{algo_ternary}
\end{algorithm}

\bigskip

By the discussion above, correctness of \cref{algo_ternary} follows naturally.
\begin{theorem}
  \label{thm_algo_ternary}
  Let $A \in \{-1,0,1\}^{m \times n}$ have $k$ nonzeros.
  Then \cref{algo_ternary} determines in expected $\orderO(k)$ time whether $A$ is series-parallel and if not, finds a signed wheel- or $N_2$-submatrix of $A$.
\end{theorem}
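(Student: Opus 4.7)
The plan is to verify correctness by a case split that mirrors the branching of the algorithm, and then to read off the runtime bound from \cref{thm_algo_reduce} and \cref{thm_algo_wheel_search}. First I would observe that the modification of \cref{algo_reduce} described above (replacing $h$ by $h'(e) = |h(e)|$ and adapting the equality test) still satisfies the two invariants in the proof of \cref{thm_algo_reduce}: any applicable ternary SP-reduction either is a zero/unit reduction, which is detected exactly as before via $|A(e)|_1 \in \{0,1\}$, or it is a (possibly negated) copy reduction, in which case a partner $e'$ with $h'(e) = h'(e')$ sits in $\hashtable$. Hence line~\ref{algo_ternary_search_reduce} returns a maximal sequence of ternary SP-reductions in expected $\orderO(k)$ time, and if $|\reduced| = m+n$ then $A$ is ternary series-parallel.

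For the else branch, let $A'$ be the ternary SP-reduced remainder and run \cref{algo_wheel_search} on $\supp(A')$ in line~\ref{algo_ternary_wheel}; this costs expected $\orderO(k)$ by \cref{thm_algo_wheel_search}. I would now split into the two cases written in the algorithm. If $\supp(A')$ is not binary series-parallel, \cref{algo_wheel_search} returns a wheel submatrix $B'$ of $\supp(A')$. Any submatrix $B''$ of $A'$ with $\supp(B'') = B'$ is, by definition, a signed wheel submatrix of $A'$ and therefore of $A$ (ternary SP-reductions only remove rows/columns, so every submatrix of $A'$ is a submatrix of $A$). Since wheel matrices are SP-reduced and every ternary SP-reduction of $B''$ would descend to a binary SP-reduction of $\supp(B'') = B'$, the matrix $B''$ certifies that $A$ is not ternary series-parallel.

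Otherwise $\supp(A')$ is binary series-parallel. I would then argue, as in the discussion preceding the algorithm, that the very first reduction in $\reduced'$ must be a copy reduction: a zero or unit reduction of $\supp(A')$ would act on an element $e$ with $|A'(e)|_1 \in \{0,1\}$, contradicting that $A'$ is ternary SP-reduced. Thus there exist $e, e' \in E$ with $\supp(A'(e)) = \supp(A'(e'))$ but $A'(e) \neq \pm A'(e')$, which forces coordinates $f, f'$ satisfying $A'(e)_f = A'(e')_f \in \{\pm 1\}$ and $A'(e)_{f'} = -A'(e')_{f'} \in \{\pm 1\}$. The resulting $2 \times 2$ submatrix $B''$ is, up to scaling its rows and columns by $\pm 1$, exactly $N_2$; since $N_2$ admits no ternary SP-reduction, $B''$ is the required certificate.

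The running time is the sum of the two expected $\orderO(k)$ calls (to the modified \cref{algo_reduce} and to \cref{algo_wheel_search}) plus the constant-per-element bookkeeping needed to lift $B'$ or to locate $f, f'$, which is $\orderO(k)$. The main obstacle, and the step I would check most carefully, is the lifting argument in the non-series-parallel branch of line~\ref{algo_ternary_wheel}: one must make sure that the rows and columns indexing $B'$ in $\supp(A')$ still index a valid submatrix of the \emph{original} $A$ (so that the certificate is really a submatrix of $A$), which follows because \cref{algo_reduce} only deletes entire rows and columns rather than re-indexing them. The rest of the argument is bookkeeping.
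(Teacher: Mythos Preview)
Your proposal is correct and is precisely a spelled-out version of the paper's one-line proof (``By the discussion above, correctness of \cref{algo_ternary} follows naturally''): the case split, the appeals to \cref{thm_algo_reduce} and \cref{thm_algo_wheel_search}, the observation that the first reduction in $\reduced'$ must be a copy reduction, and the $N_2$ extraction all mirror the surrounding discussion in \cref{sec_ternary}.

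One small remark on wording: your phrase ``is, by definition, a signed wheel submatrix'' identifies ``$\supp(B'')$ is a wheel matrix'' with ``$B''$ can be turned into a wheel matrix by $\pm 1$ row/column scalings,'' which is not literally the same (a ternary matrix with support $M_\ell$ whose $2\ell$ nonzeros have product $-1$ cannot be rescaled to $M_\ell$). The paper is equally informal here, and your substantive argument---that any ternary SP-reduction of $B''$ would induce a binary SP-reduction of the SP-reduced matrix $\supp(B'') = B'$---is exactly the certificate reasoning that matters.
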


\paragraph{Structure.}
We obtain the following characterization of ternary series-parallel matrices.

\begin{theorem}
  A ternary matrix is either series-parallel or it contains a signed wheel- or $N_2$-submatrix.
\end{theorem}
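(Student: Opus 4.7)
The plan is to mirror the short proof of \cref{thm_characterization}: derive the dichotomy as a direct corollary of the correctness of \cref{algo_ternary}. If $A$ is ternary series-parallel, the first disjunct holds by definition; otherwise, \cref{thm_algo_ternary} guarantees that \cref{algo_ternary} applied to $A$ returns either a signed wheel submatrix (arising from \cref{algo_wheel_search} applied to $\supp(A')$) or a signed $N_2$-submatrix (arising from a binary copy reduction of $\supp(A')$ that does not lift to a ternary copy reduction of $A'$). In either case this submatrix is a submatrix of the original $A$ up to row/column scalings by $-1$, yielding the second disjunct.

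For the statement to be a meaningful dichotomy rather than a vacuous disjunction, I would additionally observe that signed wheel- and signed $N_2$-submatrices are themselves not ternary series-parallel. Since scaling rows or columns by $-1$ preserves ternary SP-reducedness, this reduces to the unsigned case. The wheel matrices $M_\ell$ and $M_\ell'$ have entries in $\{0,1\}$, so any ternary SP-reduction applicable to them would necessarily be a binary SP-reduction -- the negated-unit and negated-copy cases would require a $-1$-entry, which is absent -- contradicting their binary SP-reducedness already noted in \cref{sec_nonsp}. For $N_2$, I would verify directly by inspection that neither row (resp.\ column) is zero, a signed unit vector, or a signed copy of the other.

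The only potential obstacle is bookkeeping around the word ``signed'': one must confirm that the object returned by \cref{algo_ternary} is genuinely a submatrix of the input $A$ after suitable row/column negations, and not merely of some intermediate matrix. This is however baked into the algorithm's construction. The wheel-submatrix case lifts row/column indices from $\supp(A')$ back to $A'$ and further to $A$ by undoing the SP-reductions recorded in $\reduced$ (which can only remove rows/columns, not alter the remaining entries). The $N_2$-case directly exhibits the four relevant entries $A'(e)_f$, $A'(e)_{f'}$, $A'(e')_f$, $A'(e')_{f'}$, which likewise lift to $A$. Hence the theorem is essentially immediate from \cref{thm_algo_ternary}.
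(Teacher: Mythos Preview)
Your proposal is correct and follows exactly the approach the paper intends: the theorem is stated immediately after \cref{thm_algo_ternary} as a direct consequence of the algorithm's correctness, in complete analogy with how \cref{thm_characterization} is proved from \cref{thm_algo_wheel_search}. Your additional care in verifying that signed wheel- and $N_2$-submatrices are themselves ternary SP-reduced, and in tracking the lifting of the returned submatrix from $A'$ back to $A$, goes beyond what the paper spells out but is entirely in the same spirit.
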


The matrix $N_2$ represents the uniform matroid of rank~$2$ with $4$ elements, denoted by $U_4^2$, which is the unique forbidden minor for binary matroids~\cite{Tutte58a}.
In this light, the following corollary is not surprising.

\begin{corollary}[Section~4.5 in~\cite{Truemper98}]
  A matroid is either series-parallel or contains the graphic matroid $W_3$ or the uniform matroid $U_4^2$ as a minor.
\end{corollary}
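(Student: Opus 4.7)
The plan is to dichotomize on whether the given matroid $M$ is binary, and to combine the earlier binary-matroid corollary (the one stated just after \cref{thm_characterization}) with Tutte's excluded-minor theorem for binary matroids.

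First I would handle the binary case. If $M$ is binary, then the earlier corollary applies directly and gives that $M$ is either series-parallel or contains the graphic matroid of $W_3$ as a minor. In particular, the stated trichotomy holds in this case, so nothing further is needed.

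Next I would handle the non-binary case. Series-parallel matroids, as defined in the paper, are represented by binary (series-parallel) matrices, so every series-parallel matroid is binary. Hence a non-binary $M$ cannot be series-parallel. By Tutte's theorem~\cite{Tutte58a}, recalled in the paragraph preceding the corollary, $U_4^2$ is the unique forbidden minor for binary matroids, so a non-binary $M$ must contain $U_4^2$ as a minor, placing it in the third alternative.

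Putting the two cases together yields the corollary. There is no serious obstacle: the content is essentially a bookkeeping merger of the binary-matroid characterization (proved via \cref{algo_wheel_search} and \cref{thm_characterization}) with Tutte's classical excluded-minor result, using only the observation that the class of series-parallel matroids is contained in the class of binary matroids.
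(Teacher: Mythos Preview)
Your proposal is correct. The paper does not actually supply its own proof of this corollary; it merely states the result with a citation to Truemper and the remark that ``in this light, the following corollary is not surprising.'' Your binary/non-binary dichotomy is precisely the argument the paper's context invites: the binary case is handled by the earlier corollary derived from \cref{thm_characterization}, and the non-binary case is handled by Tutte's excluded-minor theorem, which the paper explicitly recalls in the sentence immediately preceding the corollary. So your write-up fills in exactly the short argument the paper leaves implicit, and there is nothing to correct or add.
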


\section{Computational study}
\label{sec_computations}

We start by describing the changes that we made for our implementation in the C programming language.
First, we decouple hashing of rows from hashing of columns by maintaining two hash tables.
Second, instead of a random vector $p \in \N^n$ for hashing of columns we use a deterministic one, defined via
$p_i \coloneqq 3^i \text{ mod } 2^{62}$ for $i = 1,2,\dotsc,n$.
This makes the algorithm deterministic for the price of not running in linear time due to potential hash table collisions.
Note that our integer data types would in principle admit the range $[-2^{63},2^{63}) \cap \Z$.
However, we use one bit less to detect overflows, which allows us to ensure that vectors $x$ and $-x$ receive the same hash value (see \cref{sec_ternary}).
The implemented hash table uses separate chaining via singly linked lists in order to resolve collisions.
Hashing of rows is done in an identical way.
The implementation is part of the Combinatorial Matrix Recognition Library (CMR)~\cite{CMR}.
Moreover, the paper is supplemented by the scripts and the specific version of the code with which the presented results were obtained.
All the experiments were carried out on a \SI{3.0}{\giga\hertz} Intel Xeon Gold 5217 CPU on a system with \SI{64}{\giga\byte} of RAM memory.
During all our experiments, our hash table implementation never encountered any collisions.

\subsection{Matrices from mixed-integer optimization}
\label{sec_computations_mip}

\DeclareDocumentCommand\integer{m}{\num[round-precision=0]{#1}}

\begin{table}[htpb]
  \caption{%
    Results for 33 of 79 ternary coefficient submatrices of mixed-integer-programs from the MIPLIB~\cite{MIPLIB2017} that are series-parallel.
    Depicted are characteristics of the ternary submatrix of the constraint matrix as well as the running times of the total unimodularity test with and without \cref{algo_reduce}.
  }
  \label{table_mip_sp}
  \renewcommand{\arraystretch}{1.05}
  \setlength{\tabcolsep}{3pt}
  \sisetup{round-mode=places, round-precision=2}%
  \begin{center}
    \begin{tabular}{lrrrrr}
      \hline
      \textbf{Instance} & \multicolumn{3}{c}{\textbf{Ternary submatrix}} & \multicolumn{1}{c}{\textbf{Without}} & \multicolumn{1}{c}{\textbf{With}} \\
      & \textbf{rows} & \textbf{cols} & \textbf{nonzeros} & \multicolumn{1}{c}{\textbf{\cref{algo_reduce}}} & \multicolumn{1}{c}{\textbf{\cref{algo_reduce}}} \\
      \hline
\texttt{blp-ar98} & \integer{913} & \integer{16021} & \integer{15806} & \SI{0.017859}{\s} & \SI{0.0183}{\s} \\
\texttt{blp-ic98} & \integer{627} & \integer{13640} & \integer{13550} & \SI{0.015228}{\s} & \SI{0.015919}{\s} \\
\texttt{bnatt500} & \integer{6525} & \integer{4024} & \integer{10053} & \SI{0.094027}{\s} & \SI{0.005173}{\s} \\
\texttt{bppc4-08} & \integer{20} & \integer{1456} & \integer{1454} & \SI{0.002925}{\s} & \SI{0.002122}{\s} \\
\texttt{csched007} & \integer{103} & \integer{1758} & \integer{1562} & \SI{0.0061}{\s} & \SI{0.006078}{\s} \\
\texttt{csched008} & \integer{111} & \integer{1536} & \integer{1405} & \SI{0.007964}{\s} & \SI{0.003343}{\s} \\
\texttt{exp-1-500-5-5} & \integer{300} & \integer{990} & \integer{1480} & \SI{0.031854}{\s} & \SI{0.002597}{\s} \\
\texttt{fhnw-binpack4-48} & \integer{910} & \integer{3675} & \integer{3640} & \SI{0.011413}{\s} & \SI{0.013939}{\s} \\
\texttt{fhnw-binpack4-4} & \integer{152} & \integer{507} & \integer{494} & \SI{0.004194}{\s} & \SI{0.004176}{\s} \\
\texttt{glass4} & \integer{142} & \integer{256} & \integer{263} & \SI{0.00242}{\s} & \SI{0.002432}{\s} \\
\texttt{lotsize} & \integer{725} & \integer{2985} & \integer{4175} & \SI{0.105616}{\s} & \SI{0.002428}{\s} \\
\texttt{mas74} & \integer{1} & \integer{151} & \integer{150} & \SI{0.000557}{\s} & \SI{0.000567}{\s} \\
\texttt{mas76} & \integer{1} & \integer{151} & \integer{150} & \SI{0.000438}{\s} & \SI{0.000575}{\s} \\
\texttt{mik-250-20-75-4} & \integer{120} & \integer{270} & \integer{120} & \SI{0.004167}{\s} & \SI{0.004142}{\s} \\
\texttt{milo-v12-6-r2-40-1} & \integer{4914} & \integer{1764} & \integer{6296} & \SI{0.073547}{\s} & \SI{0.006823}{\s} \\
\texttt{n3div36} & \integer{4426} & \integer{22120} & \integer{27960} & \SI{0.876898}{\s} & \SI{0.044153}{\s} \\
\texttt{neos-2657525-crna} & \integer{254} & \integer{524} & \integer{676} & \SI{0.013024}{\s} & \SI{0.001567}{\s} \\
\texttt{neos-2978193-inde} & \integer{332} & \integer{20800} & \integer{20800} & \SI{0.016134}{\s} & \SI{0.012915}{\s} \\
\texttt{neos-3004026-krka} & \integer{4225} & \integer{17030} & \integer{16900} & \SI{0.045863}{\s} & \SI{0.046488}{\s} \\
\texttt{neos-3381206-awhea} & \integer{4} & \integer{2375} & \integer{1900} & \SI{0.004659}{\s} & \SI{0.00508}{\s} \\
\texttt{neos-3754480-nidda} & \integer{300} & \integer{201} & \integer{300} & \SI{0.004189}{\s} & \SI{0.004525}{\s} \\
\texttt{neos-4338804-snowy} & \integer{441} & \integer{1344} & \integer{2562} & \SI{0.006949}{\s} & \SI{0.007501}{\s} \\
\texttt{neos-4413714-turia} & \integer{952} & \integer{190401} & \integer{190004} & \SI{0.097634}{\s} & \SI{0.098469}{\s} \\
\texttt{neos-4647030-tutaki} & \integer{2800} & \integer{12600} & \integer{11200} & \SI{0.036046}{\s} & \SI{0.032161}{\s} \\
\texttt{neos-4954672-berkel} & \integer{1764} & \integer{567} & \integer{1035} & \SI{0.02295}{\s} & \SI{0.008276}{\s} \\
\texttt{neos-787933} & \integer{133} & \integer{236376} & \integer{61944} & \SI{0.312689}{\s} & \SI{0.312252}{\s} \\
\texttt{neos-848589} & \integer{737} & \integer{550539} & \integer{550539} & \SI{0.265529}{\s} & \SI{0.267024}{\s} \\
\texttt{neos17} & \integer{1} & \integer{535} & \integer{50} & \SI{0.002478}{\s} & \SI{0.002442}{\s} \\
\texttt{pk1} & \integer{30} & \integer{86} & \integer{60} & \SI{0.001886}{\s} & \SI{0.000305}{\s} \\
\texttt{proteindesign121hz512p9} & \integer{79} & \integer{159145} & \integer{159133} & \SI{0.078653}{\s} & \SI{0.078704}{\s} \\
\texttt{proteindesign122trx11p8} & \integer{67} & \integer{127326} & \integer{127315} & \SI{0.064747}{\s} & \SI{0.065054}{\s} \\
\texttt{supportcase42} & \integer{18439} & \integer{18442} & \integer{34831} & \SI{63.50751}{\s} & \SI{0.025572}{\s} \\
\texttt{tr12-30} & \integer{360} & \integer{1080} & \integer{1068} & \SI{0.025282}{\s} & \SI{0.002093}{\s} \\ \hline
    \end{tabular}
  \end{center}
\end{table}

\begin{table}[htpb]
  \caption{%
    Results for 46 of 79 ternary coefficient submatrices of mixed-integer-programs from the MIPLIB~\cite{MIPLIB2017} that are not series-parallel.
    Depicted are characteristics of the ternary submatrix of the constraint matrix and its SP-reduced submatrix as well as the running times of the total unimodularity test with and without \cref{algo_reduce}.
  }
  \label{table_mip_nosp}
  \renewcommand{\arraystretch}{1.05}
  \setlength{\tabcolsep}{3pt}
  \sisetup{round-mode=places, round-precision=1}%
  \begin{center}
    \begin{tabular}{lrrrrrrrr}
    \hline
      \textbf{Instance} & \multicolumn{3}{c}{\textbf{Ternary submatrix}} & \multicolumn{3}{c}{\textbf{SP-reduced submatrix}} & \multicolumn{1}{c}{\textbf{Without}} & \multicolumn{1}{c}{\textbf{With}} \\
      & \textbf{rows} & \textbf{cols} & \textbf{nonzeros} & \textbf{rows} & \textbf{cols} & \textbf{nonzeros} & \multicolumn{1}{c}{\textbf{\cref{algo_reduce}}} & \multicolumn{1}{c}{\textbf{\cref{algo_reduce}}} \\
      \hline
\texttt{30n20b8} & \integer{90} & \integer{18380} & \integer{18438} & \integer{27} & \integer{22} & \integer{54} & \SI{0.011019}{\s} & \SI{0.010993}{\s} \\
\texttt{50v-10} & \integer{50} & \integer{2013} & \integer{732} & \integer{50} & \integer{183} & \integer{366} & \SI{0.021119}{\s} & \SI{0.01035}{\s} \\
\texttt{assign1-5-8} & \integer{31} & \integer{156} & \integer{260} & \integer{31} & \integer{130} & \integer{260} & \SI{0.003748}{\s} & \SI{0.00372}{\s} \\
\texttt{b1c1s1} & \integer{2848} & \integer{3872} & \integer{8112} & \integer{2080} & \integer{2192} & \integer{5728} & \SI{1.577017}{\s} & \SI{1.466902}{\s} \\
\texttt{beasleyC3} & \integer{500} & \integer{2500} & \integer{2500} & \integer{301} & \integer{426} & \integer{852} & \SI{0.195069}{\s} & \SI{0.015168}{\s} \\
\texttt{binkar10\_1} & \integer{1016} & \integer{2298} & \integer{4326} & \integer{785} & \integer{1242} & \integer{3030} & \SI{0.015872}{\s} & \SI{0.012717}{\s} \\
\texttt{cbs-cta} & \integer{10112} & \integer{22326} & \integer{54520} & \integer{244} & \integer{11163} & \integer{22326} & \SI{96.590535}{\s} & \SI{8.555163}{\s} \\
\texttt{cost266-UUE} & \integer{1389} & \integer{4161} & \integer{8151} & \integer{1332} & \integer{2052} & \integer{4104} & \SI{0.070218}{\s} & \SI{0.012924}{\s} \\
\texttt{drayage-100-23} & \integer{405} & \integer{11025} & \integer{22050} & \integer{210} & \integer{10960} & \integer{21920} & \SI{7.631541}{\s} & \SI{7.650101}{\s} \\
\texttt{drayage-25-23} & \integer{405} & \integer{11025} & \integer{22050} & \integer{210} & \integer{10960} & \integer{21920} & \SI{7.88959}{\s} & \SI{7.643947}{\s} \\
\texttt{fiball} & \integer{3449} & \integer{34218} & \integer{34648} & \integer{221} & \integer{258} & \integer{688} & \SI{0.055529}{\s} & \SI{0.054523}{\s} \\
\texttt{h80x6320d} & \integer{238} & \integer{12640} & \integer{18881} & \integer{80} & \integer{3160} & \integer{6320} & \SI{3.339353}{\s} & \SI{0.555658}{\s} \\
\texttt{ic97\_potential} & \integer{1046} & \integer{205} & \integer{2092} & \integer{523} & \integer{205} & \integer{1046} & \SI{0.090329}{\s} & \SI{0.005124}{\s} \\
\texttt{icir97\_tension} & \integer{368} & \integer{2494} & \integer{2027} & \integer{243} & \integer{145} & \integer{1078} & \SI{0.050254}{\s} & \SI{0.010911}{\s} \\
\texttt{lectsched-5-obj} & \integer{33526} & \integer{5651} & \integer{59340} & \integer{7880} & \integer{127} & \integer{15760} & \SI{81.153469}{\s} & \SI{1.49771}{\s} \\
\texttt{mad} & \integer{30} & \integer{220} & \integer{400} & \integer{30} & \integer{200} & \integer{400} & \SI{0.007793}{\s} & \SI{0.008008}{\s} \\
\texttt{mc11} & \integer{400} & \integer{3040} & \integer{3040} & \integer{400} & \integer{760} & \integer{1520} & \SI{0.184517}{\s} & \SI{0.015445}{\s} \\
\texttt{neos-1122047} & \integer{57791} & \integer{5000} & \integer{115580} & \integer{57790} & \integer{5000} & \integer{115580} & \SI{37.014054}{\s} & \SI{37.16188}{\s} \\
\texttt{neos-1445765} & \integer{1024} & \integer{20617} & \integer{20663} & \integer{10} & \integer{12} & \integer{24} & \SI{0.165528}{\s} & \SI{0.019002}{\s} \\
\texttt{neos-1582420} & \integer{10080} & \integer{10100} & \integer{22407} & \integer{2387} & \integer{2407} & \integer{7021} & \SI{59.486599}{\s} & \SI{33.566895}{\s} \\
\texttt{neos-3024952-loue} & \integer{3390} & \integer{3255} & \integer{7725} & \integer{3315} & \integer{3075} & \integer{7575} & \SI{0.067966}{\s} & \SI{0.05687}{\s} \\
\texttt{neos-3046615-murg} & \integer{258} & \integer{274} & \integer{546} & \integer{18} & \integer{32} & \integer{64} & \SI{0.002618}{\s} & \SI{0.004071}{\s} \\
\texttt{neos-3083819-nubu} & \integer{4719} & \integer{8644} & \integer{20118} & \integer{3648} & \integer{6628} & \integer{17031} & \SI{25.060069}{\s} & \SI{19.333827}{\s} \\
\texttt{neos-3627168-kasai} & \integer{1190} & \integer{1448} & \integer{3038} & \integer{1022} & \integer{924} & \integer{2674} & \SI{0.128488}{\s} & \SI{0.101996}{\s} \\
\texttt{neos-4387871-tavua} & \integer{554} & \integer{4004} & \integer{7984} & \integer{279} & \integer{1012} & \integer{2024} & \SI{0.393238}{\s} & \SI{0.017358}{\s} \\
\texttt{neos-4738912-atrato} & \integer{918} & \integer{6189} & \integer{13035} & \integer{768} & \integer{2025} & \integer{5955} & \SI{350.488336}{\s} & \SI{125.508562}{\s} \\
\texttt{neos-5107597-kakapo} & \integer{6498} & \integer{138} & \integer{13008} & \integer{1698} & \integer{114} & \integer{3396} & \SI{3.396775}{\s} & \SI{0.112903}{\s} \\
\texttt{neos-911970} & \integer{59} & \integer{888} & \integer{1680} & \integer{59} & \integer{840} & \integer{1680} & \SI{0.05078}{\s} & \SI{0.051475}{\s} \\
\texttt{nexp-150-20-8-5} & \integer{2385} & \integer{20115} & \integer{22350} & \integer{150} & \integer{2119} & \integer{4238} & \SI{0.245972}{\s} & \SI{0.170972}{\s} \\
\texttt{opm2-z10-s4} & \integer{160625} & \integer{6250} & \integer{321250} & \integer{160625} & \integer{6250} & \integer{321250} & \SI{862.460508}{\s} & \SI{861.90701}{\s} \\
\texttt{p200x1188c} & \integer{200} & \integer{2376} & \integer{2376} & \integer{200} & \integer{594} & \integer{1188} & \SI{0.116153}{\s} & \SI{0.011227}{\s} \\
\texttt{radiationm18-12-05} & \integer{33265} & \integer{40623} & \integer{73885} & \integer{14905} & \integer{18372} & \integer{36948} & \SI{695.08632}{\s} & \SI{27.964045}{\s} \\
\texttt{rd-rplusc-21} & \integer{125896} & \integer{118} & \integer{124457} & \integer{4} & \integer{4} & \integer{8} & \SI{1.635799}{\s} & \SI{0.096006}{\s} \\
\texttt{reblock115} & \integer{4715} & \integer{1150} & \integer{9430} & \integer{4706} & \integer{1140} & \integer{9412} & \SI{0.355385}{\s} & \SI{0.355089}{\s} \\
\texttt{rmatr100-p10} & \integer{7260} & \integer{7359} & \integer{21877} & \integer{7260} & \integer{7259} & \integer{21777} & \SI{17.001843}{\s} & \SI{13.344478}{\s} \\
\texttt{rmatr200-p5} & \integer{37617} & \integer{37816} & \integer{113048} & \integer{37617} & \integer{37616} & \integer{112848} & \SI{867.375674}{\s} & \SI{791.631219}{\s} \\
\texttt{rococoB10-011000} & \integer{1442} & \integer{4456} & \integer{7787} & \integer{450} & \integer{2025} & \integer{4050} & \SI{0.053426}{\s} & \SI{0.019739}{\s} \\
\texttt{rococoC10-001000} & \integer{1088} & \integer{3117} & \integer{5396} & \integer{330} & \integer{1353} & \integer{2706} & \SI{0.040037}{\s} & \SI{0.011736}{\s} \\
\texttt{sct2} & \integer{2001} & \integer{5884} & \integer{8901} & \integer{1797} & \integer{2885} & \integer{5770} & \SI{6.775848}{\s} & \SI{43.227286}{\s} \\
\texttt{sp150x300d} & \integer{150} & \integer{600} & \integer{600} & \integer{150} & \integer{253} & \integer{506} & \SI{0.014844}{\s} & \SI{0.009828}{\s} \\
\texttt{supportcase18} & \integer{120} & \integer{13410} & \integer{14400} & \integer{30} & \integer{30} & \integer{70} & \SI{1.013893}{\s} & \SI{0.001932}{\s} \\
\texttt{supportcase26} & \integer{870} & \integer{40} & \integer{1700} & \integer{434} & \integer{40} & \integer{868} & \SI{0.068498}{\s} & \SI{0.011473}{\s} \\
\texttt{swath1} & \integer{503} & \integer{6805} & \integer{19121} & \integer{102} & \integer{6240} & \integer{18640} & \SI{0.558728}{\s} & \SI{0.284716}{\s} \\
\texttt{swath3} & \integer{503} & \integer{6805} & \integer{19121} & \integer{102} & \integer{6240} & \integer{18640} & \SI{0.560856}{\s} & \SI{0.285831}{\s} \\
\texttt{trento1} & \integer{1259} & \integer{7687} & \integer{73662} & \integer{1189} & \integer{6415} & \integer{69590} & \SI{13.745141}{\s} & \SI{5.657498}{\s} \\
\texttt{uccase9} & \integer{49061} & \integer{21446} & \integer{47270} & \integer{8008} & \integer{8008} & \integer{19327} & \SI{0.904189}{\s} & \SI{0.80043}{\s} \\ \hline
    \end{tabular}
  \end{center}
\end{table}

The most important application of total unimodularity lies in mixed-integer optimization since coefficient matrices of integer programs that are totally unimodular have very attractive properties~\cite{HoffmanK56}.
Moreover, many substructures of coefficient matrices arising in practice are totally unimodular, e.g., if they encode network flows or (conditional) precedences.
For this reason we created a set of ternary matrices for which testing for total unimodularity is of potential interest for mixed-integer programming.
We consider the currently 240 benchmark MIP instances from the state-of-the-art benchmark library MIPLIB~2017~\cite{MIPLIB2017}.
For each of these instances, we extracted a large ternary submatrix in a greedy fashion: successively remove a row or a column with a maximum number entries not in $\{-1,0,+1\}$.
For the matrices of the nine instances \texttt{enlight\_hard}, \texttt{gen-ip002}, \texttt{gen-ip054}, \texttt{markshare2}, \texttt{markshare\_4\_0}, \texttt{ns1952667}, \texttt{pg}, \texttt{pg5\_34} and \texttt{timtab1} our greedy removal algorithm returned a matrix without any nonzeros.
Moreover, for the four instances \texttt{neos-4763324-toguru}, \texttt{radiationm40}, \texttt{square41} and \texttt{square47} the total unimodularity test took more than an hour.
Finally, for another four instances \texttt{buildingenergy}, \texttt{neos-873061}, \texttt{snp-02-004-104} and \texttt{thor50dday} disabling the preprocessing with \cref{algo_reduce} forced the implementation to carry out extremely many $2$-sum decompositions (which were actually SP-reductions).
Since the resulting matrices of each $2$-sum decomposition are created in memory, the code ran into memory problems despite the \SI{64}{\giga\byte} available RAM.
Note that this problem does not occur when using \cref{algo_reduce} since only the SP-reduced submatrix is explicitly created in memory for further computations.
For a further 144 instance, the computed ternary submatrices did not pass the very first step of the total unimodularity test, which is a test whether the sign pattern of the nonzeros of the given ternary matrix is \emph{balanced}.
For details we refer to the respective descriptions in~\cite{Truemper90,WalterT13}.

For the remaining 79 matrices we ran the total unimodularity test with and without preprocessing using \cref{algo_reduce}.
In \cref{table_mip_sp,table_mip_nosp} we present the sizes of the ternary submatrices, the SP-reduced submatrices and report about the timings of the total unimodularity test in both runs.
Note that for the series-parallel submatrices we do not report about the SP-reduced matrices since these are empty.

For series-parallel submatrices, both running times are very short since the total unimodularity test does not need to do anything further besides searching for $2$-separations and carrying out $2$-sum decompositions.
In particular, running times are very similar.
Only for larger matrices the preprocessing actually gives a speed-up.
The reason is that an explicit application of a $2$-sum decomposition is more costly for larger matrices.

Among the more interesting instances with non-series-parallel ternary submatrices we can find more with longer running times.
Here, after carrying out 2-sum decompositions, further algorithms are applied, among them the enumeration of 3-separations which is responsible for the asymptotic running time of the total unimodularity test.
Here, in several cases, preprocessing reduces the running time significantly.
Moreover, it rarely slows down the computation, except for one remarkable instance: for \texttt{sct2}, detailed inspection reveals that the version without preprocessing spends about \SI{2.5}{\s} for finding 2-separations 3750 times, while the SP-reduction algorithm is only called 348 times, which requires \SI{40.1}{\s}.
Profiling indicated that after carrying out the SP-reductions, subsequent decisions are made differently than in the run without preprocessing.

These instances show that submatrices arising in practical mixed-integer programs admit SP-reductions, and are sometimes even series-parallel.
Moreover, we can conclude that in most cases the application of \cref{algo_reduce} is advantageous.

\subsection{Random matrices}
\label{sec_computations_random}

\begin{figure}[htpb]
  \centering
  \subfloat[][%
    Effect of reduction type on running time.
  ]{%
    \begin{tikzpicture}
      \begin{axis}[
        ybar stacked,
        enlargelimits=0.05,
        ymajorgrids,
        xmin=0,
        xmax=1.0,
        ymin=0,
        ymax=18,
        xtick={0.0,0.1,0.2,0.3,0.4,0.5,0.6,0.7,0.8,0.9,1.0},
        xticklabels={$0$,$\frac{1}{10}$,$\frac{2}{10}$,$\frac{3}{10}$,$\frac{4}{10}$,$\frac{5}{10}$,$\frac{6}{10}$,$\frac{7}{10}$,$\frac{8}{10}$,$\frac{9}{10}$,$1$},
        ytick={0,2,4,6,8,10,12,14,16,18},
        x label style={yshift=-3mm},
        width=0.46\textwidth,
        height=65mm,
        bar width=1.4mm,
        ylabel={Running time [s]},
        xlabel={Fraction $\gamma = 1-\beta$ of copy extensions},
        legend entries={Binary reduction,Ternary reduction,Graphicness},
        legend cell align=left,
        legend pos=north west,
        legend style={font=\small},
        ]
          \addplot+[blue!80!black, fill=blue, bar shift=-1.7mm] coordinates {
            (0.0, 0.005)
            (0.1, 0.005)
            (0.2, 0.006)
            (0.3, 0.006)
            (0.4, 0.008)
            (0.5, 0.014)
            (0.6, 0.035)
            (0.7, 0.100)
            (0.8, 0.358)
            (0.9, 1.326)
            (1.0, 5.674)
          };

          \resetstackedplots

          \addplot+[red!80!black, fill=red, bar shift=0mm, postaction={pattern=north east lines}] coordinates {
            (0.0, 0.005)
            (0.1, 0.005)
            (0.2, 0.006)
            (0.3, 0.007)
            (0.4, 0.009)
            (0.5, 0.014)
            (0.6, 0.033)
            (0.7, 0.103)
            (0.8, 0.370)
            (0.9, 1.574)
            (1.0, 6.123)
          };

          \resetstackedplots

          \addplot+[cyan!80!black, fill=cyan, bar shift=1.7mm, postaction={pattern=north west lines}] coordinates {
             (0.0,0.032)
             (0.1,0.034)
             (0.2,0.035)
             (0.3,0.038)
             (0.4,0.042)
             (0.5,0.054)
             (0.6,0.084)
             (0.7,0.220)
             (0.8,0.836)
             (0.9,3.720)
             (1.0,16.749)
          };

      \end{axis}
    \end{tikzpicture}
  }
  \hfill
  \subfloat[][%
    Effect of reduction type on running time per nonzero.\label{fig_comp_series_parallel_per_nonzero}
  ]{%
    \begin{tikzpicture}
      \begin{axis}[
        enlargelimits=0.05,
        grid=major,
        xmin=0,
        xmax=1.0,
        xtick={0.0,0.1,0.2,0.3,0.4,0.5,0.6,0.7,0.8,0.9,1.0},
        xticklabels={$0$,$\frac{1}{10}$,$\frac{2}{10}$,$\frac{3}{10}$,$\frac{4}{10}$,$\frac{5}{10}$,$\frac{6}{10}$,$\frac{7}{10}$,$\frac{8}{10}$,$\frac{9}{10}$,$1$},
        ytick={0,200,400,600,800,1000,1200,1400,1600},
        x label style={yshift=-3mm},
        ymin=0,
        ymax=1600,
        width=0.46\textwidth,
        height=65mm,
        ylabel={Running time per nonzero [ns]},
        xlabel={Fraction $\gamma = 1-\beta$ of copy extensions},
        legend entries={Binary,Ternary,Graphicness},
        legend cell align=left,
        legend pos=north east,
        legend style={font=\small},
        ]
        \addplot[blue,mark size=1mm,only marks] coordinates {
          (0.0,254.415)
          (0.1,243.886)
          (0.2,215.672)
          (0.3,185.732)
          (0.4,162.386)
          (0.5,138.960)
          (0.6,119.685)
          (0.7,98.614)
          (0.8,80.228)
          (0.9,68.103)
          (1.0,58.033)
        };

        \addplot[mark=diamond*, mark size=1mm, red, only marks] coordinates {
          (0.0,262.330)
          (0.1,252.633)
          (0.2,226.265)
          (0.3,196.478)
          (0.4,166.679)
          (0.5,137.745)
          (0.6,121.823)
          (0.7,102.793)
          (0.8,85.754)
          (0.9,73.340)
          (1.0,62.587)
        };

        \addplot[mark=triangle*,cyan,mark size=1mm,only marks] coordinates {
          (0.0,1582.953)
          (0.1,1489.558)
          (0.2,1335.169)
          (0.3,1100.803)
          (0.4,804.722)
          (0.5,530.022)
          (0.6,295.272)
          (0.7,205.278)
          (0.8,186.253)
          (0.9,179.435)
          (1.0,167.491)
        };
      \end{axis}
    \end{tikzpicture}
  }
  \caption{%
    Results for random series-parallel matrices with $\alpha = \delta = 0$, $p=1$, $\beta = 1 - \gamma$ and varying values for $\gamma \in [0,1]$.
    Averaged over 1000 matrices per configuration.
  }
  \label{fig_comp_series_parallel}
\end{figure}

\begin{figure}[htpb]
  \centering
  \subfloat[][%
    Effect of number of perturbed entries on running time.\label{fig_comp_perturbed_time}
  ]{%
    \begin{tikzpicture}
      \begin{axis}[
        ybar stacked,
        enlargelimits=0.05,
        ymajorgrids,
        xmin=0,
        xmax=20000,
        ymin=0,
        ymax=0.2,
        ytick={0,0.05,0.1,0.15,0.2},
        width=0.90\textwidth,
        height=65mm,
        scaled x ticks={base 10:-3},
        scaled y ticks={real:0.001},
        ytick scale label code/.code={$\cdot 10^{-3}$},
        bar width=1.9mm,
        ylabel={Running time [s]},
        xlabel={Number of perturbed entries},
        legend entries={Binary reduction,Binary wheel search,Ternary reduction,Ternary wheel search,Ternary $N_2$ search},
        legend cell align=left,
        legend pos=north west,
        legend style={font=\small},
        ]
          \addplot+[blue!80!black, fill=blue, bar shift=-1.1mm] coordinates {
            (0,0.011)
            (1000,0.011)
            (2000,0.010)
            (3000,0.010)
            (4000,0.010)
            (5000,0.010)
            (6000,0.010)
            (7000,0.009)
            (8000,0.009)
            (9000,0.009)
            (10000,0.009)
            (11000,0.009)
            (12000,0.009)
            (13000,0.009)
            (14000,0.009)
            (15000,0.008)
            (16000,0.008)
            (17000,0.008)
            (18000,0.008)
            (19000,0.008)
            (20000,0.009)
          };

          \addplot+[blue!80!white, fill=blue!50!white, bar shift=-1.1mm] coordinates {
            (0,0.000)
            (1000,0.014)
            (2000,0.028)
            (3000,0.037)
            (4000,0.045)
            (5000,0.057)
            (6000,0.067)
            (7000,0.074)
            (8000,0.081)
            (9000,0.093)
            (10000,0.103)
            (11000,0.114)
            (12000,0.117)
            (13000,0.120)
            (14000,0.133)
            (15000,0.132)
            (16000,0.149)
            (17000,0.154)
            (18000,0.150)
            (19000,0.160)
            (20000,0.178)
          };

          \resetstackedplots

          \addplot+[red!80!black, fill=red, bar shift=1.1mm, postaction={pattern=north east lines}] coordinates {
            (0,0.013)
            (1000,0.014)
            (2000,0.014)
            (3000,0.013)
            (4000,0.013)
            (5000,0.012)
            (6000,0.011)
            (7000,0.011)
            (8000,0.011)
            (9000,0.010)
            (10000,0.010)
            (11000,0.010)
            (12000,0.010)
            (13000,0.009)
            (14000,0.010)
            (15000,0.009)
            (16000,0.009)
            (17000,0.009)
            (18000,0.009)
            (19000,0.009)
            (20000,0.009)
          };

          \addplot+[red!80!white, fill=red!50!white, bar shift=1.1mm, postaction={pattern=north east lines}] coordinates {
            (0,0.000)
            (1000,0.015)
            (2000,0.023)
            (3000,0.032)
            (4000,0.044)
            (5000,0.050)
            (6000,0.059)
            (7000,0.066)
            (8000,0.071)
            (9000,0.077)
            (10000,0.092)
            (11000,0.096)
            (12000,0.104)
            (13000,0.107)
            (14000,0.125)
            (15000,0.130)
            (16000,0.134)
            (17000,0.146)
            (18000,0.151)
            (19000,0.142)
            (20000,0.160)
          };

          \addplot+[red!40!white, fill=red!20!white, bar shift=1.1mm, postaction={pattern=north east lines}] coordinates {
            (0,0.000)
            (1000,0.001)
            (2000,0.001)
            (3000,0.001)
            (4000,0.001)
            (5000,0.002)
            (6000,0.002)
            (7000,0.002)
            (8000,0.002)
            (9000,0.002)
            (10000,0.002)
            (11000,0.002)
            (12000,0.002)
            (13000,0.002)
            (14000,0.002)
            (15000,0.002)
            (16000,0.002)
            (17000,0.002)
            (18000,0.002)
            (19000,0.002)
            (20000,0.002)
          };

      \end{axis}
    \end{tikzpicture}
  }

  \subfloat[][%
    Effect of number of perturbed entries on the number of possible SP-reductions.\label{fig_comp_perturbed_reductions}
  ]{%
    \begin{tikzpicture}
      \begin{axis}[
        enlargelimits=0.05,
        grid=major,
        xmin=0,
        scaled y ticks={base 10:-3},
        scaled x ticks={base 10:-3},
        x label style={yshift=-3mm},
        ymin=0,
        width=0.90\textwidth,
        height=65mm,
        ylabel={Number of SP-reductions},
        xlabel={Number of perturbed entries},
        legend entries={Binary,Ternary},
        legend cell align=left,
        legend pos=north east,
        legend style={font=\small},
        ]
        \addplot[blue,mark size=1mm,only marks] coordinates {
          (0,20000.0)
          (1000,14877.741)
          (2000,12124.935)
          (3000,10135.856)
          (4000,8593.215)
          (5000,7352.385)
          (6000,6334.444)
          (7000,5486.651)
          (8000,4770.08)
          (9000,4164.125)
          (10000,3644.12)
          (11000,3197.414)
          (12000,2808.424)
          (13000,2480.052)
          (14000,2188.203)
          (15000,1933.983)
          (16000,1717.079)
          (17000,1519.384)
          (18000,1351.854)
          (19000,1204.716)
          (20000,1068.677)
        };
        \addplot[mark=diamond*,mark size=1mm,red,only marks] coordinates {
          (0,20000.0)
          (1000,14874.641)
          (2000,12126.381)
          (3000,10138.297)
          (4000,8593.632)
          (5000,7349.927)
          (6000,6333.315)
          (7000,5484.406)
          (8000,4769.443)
          (9000,4160.348)
          (10000,3640.581)
          (11000,3195.554)
          (12000,2811.165)
          (13000,2477.014)
          (14000,2187.057)
          (15000,1934.97)
          (16000,1715.851)
          (17000,1521.537)
          (18000,1351.857)
          (19000,1203.337)
          (20000,1072.733)
        };
      \end{axis}
    \end{tikzpicture}
  }
  \caption{%
    Results for random series-parallel matrices with $\alpha = 0$, $\beta = \gamma = 0.5$, $p=1$ and varying values for $\delta \in [0,2]$.
    Averaged over 1000 matrices per configuration.}
  \label{fig_comp_perturbed}
\end{figure}

\begin{figure}[htpb]
  \centering
  \subfloat[][%
    Effect of base matrix size on running time.\label{fig_comp_varybase_time}
  ]{%
    \begin{tikzpicture}
      \begin{axis}[
        ybar stacked,
        enlargelimits=0.05,
        ymajorgrids,
        xmin=0,
        xmax=1.0,
        ymin=0,
        ymax=20,
        xtick={0.0,0.1,0.2,0.3,0.4,0.5,0.6,0.7,0.8,0.9,1.0},
        xticklabels={$0$,$\frac{1}{10}$,$\frac{2}{10}$,$\frac{3}{10}$,$\frac{4}{10}$,$\frac{5}{10}$,$\frac{6}{10}$,$\frac{7}{10}$,$\frac{8}{10}$,$\frac{9}{10}$,$1$},
        ytick={0,5,10,15,20},
        x label style={yshift=-3mm},
        width=0.46\textwidth,
        height=65mm,
        bar width=2mm,
        ylabel={Running time [s]},
        xlabel={Relative base matrix size $\alpha$},
        legend entries={Binary reduction,Binary wheel search,Ternary reduction,Ternary wheel search,Ternary $N_2$ search},
        legend cell align=left,
        legend pos=north west,
        legend style={font=\small},
        ]
          \addplot+[blue!80!black, fill=blue, bar shift=-1.1mm] coordinates {

            (0.0,0.012)
            (0.1,0.629)
            (0.2,1.241)
            (0.3,1.715)
            (0.4,2.131)
            (0.5,2.587)
            (0.6,2.741)
            (0.7,2.758)
            (0.8,2.616)
            (0.9,2.373)
            (1.0,2.057)
          };
          \addplot+[blue!80!white, fill=blue!50!white, bar shift=-1.1mm] coordinates {
            (0.0,0.000)
            (0.1,0.135)
            (0.2,0.545)
            (0.3,1.242)
            (0.4,2.214)
            (0.5,3.634)
            (0.6,5.237)
            (0.7,7.079)
            (0.8,9.062)
            (0.9,11.280)
            (1.0,13.941)
          };

          \resetstackedplots

          \addplot+[red!80!black, fill=red, bar shift=1.1mm, postaction={pattern=north east lines}] coordinates {
            (0.0,0.011)
            (0.1,0.606)
            (0.2,1.184)
            (0.3,1.669)
            (0.4,2.015)
            (0.5,2.282)
            (0.6,2.475)
            (0.7,2.559)
            (0.8,2.522)
            (0.9,2.343)
            (1.0,1.994)
          };
          \addplot+[red!80!white, fill=red!50!white, bar shift=1.1mm, postaction={pattern=north east lines}] coordinates {
            (0.0,0.000)
            (0.1,0.132)
            (0.2,0.540)
            (0.3,1.214)
            (0.4,2.147)
            (0.5,3.357)
            (0.6,4.881)
            (0.7,6.669)
            (0.8,8.750)
            (0.9,11.128)
            (1.0,13.705)
          };
          \addplot+[red!40!white, fill=red!20!white, bar shift=1.1mm, postaction={pattern=north east lines}] coordinates {
            (0.0,0.000)
            (0.1,0.013)
            (0.2,0.034)
            (0.3,0.055)
            (0.4,0.079)
            (0.5,0.110)
            (0.6,0.144)
            (0.7,0.185)
            (0.8,0.227)
            (0.9,0.274)
            (1.0,0.322)
          };
      \end{axis}
    \end{tikzpicture}
  }
  \hfill
  \subfloat[][%
    Effect of base matrix size on running time per nonzero.\label{fig_comp_varybase_per_nonzero}
  ]{%
    \begin{tikzpicture}
      \begin{axis}[
        enlargelimits=0.05,
        grid=major,
        xmin=0,
        xmax=1.0,
        xtick={0.0,0.1,0.2,0.3,0.4,0.5,0.6,0.7,0.8,0.9,1.0},
        xticklabels={$0$,$\frac{1}{10}$,$\frac{2}{10}$,$\frac{3}{10}$,$\frac{4}{10}$,$\frac{5}{10}$,$\frac{6}{10}$,$\frac{7}{10}$,$\frac{8}{10}$,$\frac{9}{10}$,$1$},
        x label style={yshift=-3mm},
        ymin=0,
        ymax=400,
        width=0.46\textwidth,
        height=65mm,
        ylabel={Running time per nonzero [ns]},
        xlabel={Relative base matrix size $\alpha$},
        legend entries={Binary,Ternary},
        legend cell align=left,
        legend pos=south west,
        legend style={font=\small},
        ]
        \addplot[blue,mark size=1mm,only marks] coordinates {
          (0.0,120.693)
          (0.1,152.928)
          (0.2,179.966)
          (0.3,198.689)
          (0.4,218.672)
          (0.5,250.384)
          (0.6,267.574)
          (0.7,282.561)
          (0.8,293.352)
          (0.9,304.800)
          (1.0,321.510)
        };
        \addplot[mark=diamond*,mark size=1mm,red,only marks] coordinates {
          (0.0,116.204)
          (0.1,151.270)
          (0.2,178.158)
          (0.3,197.307)
          (0.4,213.706)
          (0.5,231.639)
          (0.6,251.571)
          (0.7,270.334)
          (0.8,288.838)
          (0.9,306.824)
          (1.0,321.819)
        };
      \end{axis}
    \end{tikzpicture}
  }
  \caption{%
    Results for random series-parallel matrices with $\delta = 0$, $p=0.5$, $\beta = \gamma = \tfrac{1}{2}(1-\alpha)$ and varying values for $\alpha \in [0,1]$.
    Averaged over 100 matrices per configuration.}
  \label{fig_comp_varybase}
\end{figure}

\begin{figure}[htpb]
  \centering
  \subfloat[][%
    Effect of nonzero probability on running time.
    \label{fig_comp_bigbase_time}
  ]{%
    \begin{tikzpicture}
      \begin{axis}[
        ybar stacked,
        enlargelimits=0.05,
        ymajorgrids,
        xmin=0,
        xmax=1.0,
        ymin=0,
        ymax=45,
        xtick={0.0,0.1,0.2,0.3,0.4,0.5,0.6,0.7,0.8,0.9,1.0},
        xticklabels={$0$,$\frac{1}{10}$,$\frac{2}{10}$,$\frac{3}{10}$,$\frac{4}{10}$,$\frac{5}{10}$,$\frac{6}{10}$,$\frac{7}{10}$,$\frac{8}{10}$,$\frac{9}{10}$,$1$},
        ytick={0,10,20,30,40},
        x label style={yshift=-3mm},
        width=0.46\textwidth,
        height=65mm,
        bar width=2mm,
        ylabel={Running time [s]},
        xlabel={Nonzero-probability $p$},
        legend entries={Binary reduction,Binary wheel search,Ternary reduction,Ternary wheel search,Ternary $N_2$ search},
        legend cell align=left,
        legend pos=north west,
        legend style={font=\small},
        ]
          \addplot+[blue!80!black, fill=blue, bar shift=-1.1mm] coordinates {
            (0.0,0.004)
            (0.1,0.442)
            (0.2,0.895)
            (0.3,1.302)
            (0.4,1.770)
            (0.5,2.155)
            (0.6,2.510)
            (0.7,2.837)
            (0.8,3.124)
            (0.9,3.446)
            (1.0,5.611)
          };
          \addplot+[blue!80!white, fill=blue!50!white, bar shift=-1.1mm] coordinates {
            (0.0,0.000)
            (0.1,2.380)
            (0.2,5.167)
            (0.3,8.387)
            (0.4,12.099)
            (0.5,14.873)
            (0.6,16.973)
            (0.7,18.998)
            (0.8,20.567)
            (0.9,21.802)
            (1.0,0.000)
          };

          \resetstackedplots

          \addplot+[red!80!black, fill=red, bar shift=1.1mm, postaction={pattern=north east lines}] coordinates {
            (0.0,0.004)
            (0.1,0.426)
            (0.2,0.847)
            (0.3,1.230)
            (0.4,1.640)
            (0.5,2.017)
            (0.6,2.442)
            (0.7,2.808)
            (0.8,3.124)
            (0.9,3.481)
            (1.0,3.717)
          };
          \addplot+[red!80!white, fill=red!50!white, bar shift=1.1mm, postaction={pattern=north east lines}] coordinates {
            (0.0,0.000)
            (0.1,2.268)
            (0.2,4.965)
            (0.3,7.904)
            (0.4,10.967)
            (0.5,13.798)
            (0.6,16.445)
            (0.7,18.697)
            (0.8,20.464)
            (0.9,21.987)
            (1.0,0.000)
          };
          \addplot+[red!40!white, fill=red!20!white, bar shift=1.1mm, postaction={pattern=north east lines}] coordinates {
            (0.0,0.000)
            (0.1,0.073)
            (0.2,0.135)
            (0.3,0.199)
            (0.4,0.260)
            (0.5,0.321)
            (0.6,0.377)
            (0.7,0.431)
            (0.8,0.486)
            (0.9,0.554)
            (1.0,0.611)
          };
      \end{axis}
    \end{tikzpicture}
  }
  \hfill
  \subfloat[][%
    Effect of nonzero probability on running time per nonzero.
    \label{fig_comp_bigbase_per_nonzero }
  ]{%
    \begin{tikzpicture}
      \begin{axis}[
        enlargelimits=0.05,
        grid=major,
        xmin=0,
        xmax=1.0,
        xtick={0.0,0.1,0.2,0.3,0.4,0.5,0.6,0.7,0.8,0.9,1.0},
        xticklabels={$0$,$\frac{1}{10}$,$\frac{2}{10}$,$\frac{3}{10}$,$\frac{4}{10}$,$\frac{5}{10}$,$\frac{6}{10}$,$\frac{7}{10}$,$\frac{8}{10}$,$\frac{9}{10}$,$1$},
        x label style={yshift=-3mm},
        ymin=0,
        ymax=400,
        width=0.46\textwidth,
        height=65mm,
        ylabel={Running time per nonzero [ns]},
        xlabel={Nonzero-probability $p$},
        legend entries={Binary,Ternary},
        legend cell align=left,
        legend pos=south west,
        legend style={font=\small},
        ]
        \addplot[blue,mark size=1mm,only marks] coordinates {
          (0.1,283.704)
          (0.2,304.751)
          (0.3,324.492)
          (0.4,348.269)
          (0.5,342.087)
          (0.6,326.161)
          (0.7,313.236)
          (0.8,297.433)
          (0.9,281.835)
          (1.0,57.379)
        };
        \addplot[mark=diamond*,mark size=1mm,red,only marks] coordinates {
          (0.1,278.525)
          (0.2,298.944)
          (0.3,312.562)
          (0.4,323.103)
          (0.5,324.101)
          (0.6,322.429)
          (0.7,314.702)
          (0.8,302.250)
          (0.9,290.374)
          (1.0,44.539)
        };
      \end{axis}
    \end{tikzpicture}
  }
  \caption{%
    Results for random series-parallel matrices with $\alpha = 1$, $\beta = \gamma = \delta = 0$ and varying values for $p \in [0,1]$.
    Averaged over 10 matrices per configuration.}
  \label{fig_comp_bigbase}
\end{figure}

In order to investigate the factors that have impact on the running time of our algorithms we generated random $N$-by-$N$ matrices in a structured way.
The matrices are parameterized by four factors $\alpha$, $\beta$, $\gamma$ and $\delta$ and a probability $p$.
The first factor $\alpha \in [0,1]$ indicates the portion of an $(\alpha \cdot N)$-by-$(\alpha \cdot N)$ \emph{base matrix} whose entries were drawn from a uniform distribution.
For binary matrices, $p$ indicates the probability of a $1$, while for ternary matrices, $+1$ and $-1$ each occur with probability $p/2$.
With high probability, this base matrix does not admit SP-reductions, unless $p$ is close to $0$ or $1$ or $\alpha$ is close to $0$.

This base matrix is extended by subsequently adding $(\beta \cdot N)$ unit rows, $(\beta \cdot N)$ unit columns, $(\gamma \cdot N)$ copies of rows and $(\gamma \cdot N)$ copies of columns to it.
For unit vectors the place of the unique $1$-entry is chosen uniformly at random.
Similarly, for copied vectors the source vector of the copy is chosen uniformly at random among the available ones.
To obtain the right matrix size always have $\alpha + \beta + \gamma = 1$.
The final step in our generation procedure is perturbation as proposed by one the anonymous reviewers:
in order to destroy a small amount of series-parallel structure we flip $(\delta \cdot N)$-many entries.
To stay within the (memory) capacity of our computational environment we carried out all our experiments for $N = \SI{10000}{}$.

\paragraph{Series-parallel matrices.}
The first set of random matrices has $\alpha=0$ (technically, there exists a $1$-by-$1$ base matrix), varying $\beta$, $\gamma = 1-\beta$ and $\delta = 0$.
Hence, all these matrices are series-parallel, and hence $2N$ reductions are carried out each time.
The results are shown in \cref{fig_comp_series_parallel}, also in comparison to our implementation of the graphicness algorithm due to Bixby and Wagner~\cite{BixbyW88}.
It is easy to see that our specialized algorithm performs significantly better than the graphicness test.
However, we would like to emphasize that the main motivation for \cref{algo_reduce} is not to test for being series-parallel faster than the graphicness test, but also to compute the (largest) SP-reduced submatrix in case of a non-series-parallel matrix, as done for the matrices from mixed-integer programming.
In particular, the graphicness test does not produce such a submatrix.
Hence, we omit the comparison for subsequent matrices since these are mostly non-series-parallel.

Now let us turn to the actual analysis of our algorithm's behavior.
The visible running time increase in case of a larger number of copy reductions seems to be surprisingly high.
Hence, we also depict the average running time per nonzero.
As \cref{fig_comp_series_parallel_per_nonzero} shows, this value even decreases with a larger number of copy reductions.
This can be explained by the fact that the algorithm spends some time per SP-reduction and some time per nonzero that is processed for such a reduction.
Obviously, a unit reduction requires to process only one nonzero.

\paragraph{Perturbed series-parallel matrices.}
For the second set we kept $\alpha = 0$, $\beta = \gamma = 0.5$ and $p=1$, but now varied $\delta$, i.e., we considered random perturbations of series-parallel matrices.
Results are shown in \cref{fig_comp_perturbed}.
Clearly, the number of possible SP-reductions decreases with increasing $\delta$, as \cref{fig_comp_perturbed_reductions} shows.
This is natural since the first perturbed entries most likely destroys the corresponding row and the corresponding column reduction at once.
The running times of \cref{algo_reduce} are not affected drastically, in particular since the overall number of nonzeros of the matrices do not change much.
However, it is apparent that the running time of \cref{algo_wheel_search} is dominated by the size of the SP-reduced submatrix.

\paragraph{Matrices with varying base matrix sizes.}
For the third set we kept $\delta = 0$, $p = 0.5$, varied $\alpha$ and set $\beta = \gamma = \tfrac{1}{2}(1-\alpha)$, i.e., we considered different sizes of base matrices that were extended to size $N$-by-$N$.
For all the generated matrices, exactly $(1-\alpha) \cdot 2N$ many SP-reductions reductions could be applied, i.e., the all generated base matrices were SP-reduced.
The results are shown in \cref{fig_comp_varybase}.
While the effort for \cref{algo_wheel_search} increases with $\alpha$, the running time for \cref{algo_reduce} starts to decrease again for larger $\alpha$, which can be justified by the fact that fewer reductions must actually applied.
As in the previous experiments, the overall running time is dominated by that of \cref{algo_wheel_search}.
In particular, searching an $N_2$-submatrix takes almost no time.
Finally, while our implementation was slower for ternary series-parallel matrices than for binary ones, the situation is reversed for non-series-parallel matrices.
We do not have an explanation for this, but the differences are also not very significant.

\paragraph{Matrices without SP-reductions.}
Our last set of randomly generated matrices has $\alpha = 1$, i.e., an $N$-by-$N$ base matrix, $\beta = \gamma = \delta = 0$, and varying nonzero probabilities $p$.
For $p=0$ and for $p=1$ (in the binary case) the resulting matrices were the all-zeros and all-ones matrices, respectively, and thus series-parallel.
In all other cases, all of the generated matrices were SP-reduced.
Moreover, all returned forbidden submatrices were $M_3$ or $M_3'$, except for the ternary case with $p=1$, where $N_2$ was returned.
Further results are shown in \cref{fig_comp_bigbase}.
Both, the effort for \cref{algo_reduce} for determining that the matrix is SP-reduced, and the effort for \cref{algo_wheel_search} for finding a wheel submatrix grow with the increasing nonzero probability, which underlines our previous explanation that the number of nonzeros is the driving factor.
As before, the overall running time is dominated by the search for wheel submatrices.
The jump for $p=1$ is due to the special structure:
for binary matrices, the algorithm has to apply SP-reductions, which is generally fast, while for ternary matrices it has to determine that no ternary SP-reduction is necessary and inspect a single binary SP-reduction in order to find an $N_2$-submatrix.

\section{Conclusion}
\label{sec_conclusion}

Our presented algorithm for the problem of recognizing series-parallel matrices is not only theoretically fast.
We have shown experimentally that is is also very effective in practice and can deal with large matrices.
In fact the measured running times were in the same order of magnitude as elementary operations such as transposing the matrix (assuming that it is given in a sparse data structure).
This means that with this direct test one can recognize series-parallel matrices much faster than by determining the graph first and testing the latter for being series-parallel.

Moreover, our algorithm and is certifying, and actually computing a certificate for not being series-parallel requires linear time as well.
In practice, this step actually needs most of the running time.
As a byproduct, our algorithm is able to determine a maximal non-SP submatrix.
This allows to use it as a preprocessing for more complex algorithms such as the recognition of totally unimodular matrices.
In fact we can conclude that for several matrices from mixed-integer programming the bottleneck of the total unimodularity test was indeed the application of SP-reductions.

\paragraph{Acknowledgements.}
The author is grateful to the anonymous reviewers whose comments led to improvements of this manuscript, in particular for the suggestions of further matrix classes.
This publication is part of the project \emph{Making Mixed-Integer Programming Solvers Smarter and Faster using Network Matrices} (with project number \emph{OCENW.M20.151} of the research programme \emph{NWO Open Competition Domain Science -- M} which is (partly) financed by the Dutch Research Council (NWO).

\bibliographystyle{plainurl}
\bibliography{series-parallel-matroid-recognition}

\end{document}